\newtheorem{lemma}{Lemma}
\newtheorem{observation}{Observation}
\newtcolorbox[auto counter]{mybox}[2][]{
	enhanced,
	breakable,
	colback=blue!5!white,
	colframe=blue!75!black,
	fonttitle=\bfseries,
	title=Box \thetcbcounter: #2,#1
}
\begin{document}

\title{Approximate quantum error correction, covariance symmetry, and their relation}

\author{Hao Dai}
\email{dhao@mail.tsinghua.edu.cn}
\affiliation{Center for Quantum Information, Institute for Interdisciplinary Information Sciences, Tsinghua University, Beijing 100084, P.~R.~China}
\affiliation{Hefei National Laboratory, University of Science and Technology of China, Hefei, Anhui 230088, P.~R.~China}

\begin{abstract}
     To perform reliable quantum computation, quantum error correction is indispensable. In certain cases, continuous covariance symmetry of the physical system can make exact error correction impossible. In this work we study the approximate error correction and covariance symmetry from the information-theoretic perspective. For general encoding and noise channels, we define a quantity named infidelity to characterize the performance of the approximate quantum error correction and quantify the noncovariance of an encoding channel with respect to a general Lie group from the asymmetry measure of the corresponding Choi state. In particular, when the encoding channel is isometric, we derive a trade-off relation between infidelity and noncovariance. Furthermore, we calculate the average infidelity and noncovariance measure for a type of random code. 
\end{abstract}

\maketitle

\section{introduction}

Errors are inevitable in quantum computing and quantum error correction (QEC) provides a method to realize fault-tolerant quantum computation \cite{gottesman2010introduction,campbell2017roads}. The subject has been studied for decades and various correcting codes have been developed \cite{RevModPhys.87.307,Devitt2013quantum}. Beyond the quantum computation, QEC is closely connected with a wide range of quantum topics, such as quantum metrology \cite{PhysRevLett.112.080801,PhysRevLett.116.230502,Shettell_2021} and quantum entanglement \cite{PhysRevA.54.3824,brun2006correcting,verlinde2013black}.

Symmetry is a ubiquitous property of the physical system and can put strong constraints on the QEC. A no-go theorem, also known as the Eastin–Knill theorem, claims that there does not exist a local error-detecting code in a finite-dimensional system that allows for a set of universal logical gates to act transversally on the physical system \cite{PhysRevLett.102.110502}. This theorem implies that the continuous covariance symmetry and exact correction can be incompatible in certain cases \cite{PRXQuantum.2.010326,PhysRevX.10.041018}, which has motivated the exploration of the relation between covariance symmetry and approximate QEC. Several studies have focused on the performance of quantum codes that are exactly covariant but correct errors approximately \cite{PhysRevLett.126.150503,PRXQuantum.3.020314,PhysRevResearch.4.023107,Zhou2021newperspectives}. In particular, when the symmetry group is the U(1) Lie group the corresponding generator in the physical system is a Hamiltonian, covariant codes cannot correct errors perfectly if the physical Hamiltonian satisfies the Hamiltonian-in-Kraus-span (HKS) condition\cite{liu2021approximate,PRXQuantum.2.010343}. Under this special case, the relation between the covariance violation and the inaccuracy of the approximate QEC has been investigated \cite{liu2021quantum,liu2021approximate}. 

In this work we study the approximate QEC and the covariance symmetry from an information-theoretic perspective. For general encoding and noise channels in the form of the Kraus representations, we evaluate the error-correcting capability of the codes via a defined quantity called infidelity, which is related to entanglement fidelity. When infidelity is equal to $0$, the errors caused by the noise channel can be corrected exactly. We also quantify the violation of covariance symmetry, which we term noncovariance, from the asymmetry measures of the corresponding Choi state. We specifically explore the infidelity and noncovariance measure for isometric encoding codes. Moreover, we prove again that under the HKS condition, exact correctability and covariance are incompatible. In addition, we investigate the generalized Wigner-Yanase skew information and derive a sum uncertainty relation. By virtue of the generalized skew information, we obtain a trade-off
relation between infidelity and noncovariance. Furthermore, we also calculate the average infidelity and noncovariance measure for a type of random code.

The paper is organized as follows. In Sec. \ref{sec:pre} we review the basic concepts including QEC, Wigner-Yanase skew formation and asymmetry measures for states. In Secs. \ref{sec:app} and \ref{sec:nonc} we quantify the inaccuracy of the approximate QEC and the noncovariance, respectively. In Sec. \ref{sec:iso} we study the special case for the isometric encoding channel. In Sec. \ref{sec:out} we summarize and offer a suggestion for future work.

%] Z.-W. Liu and S. Zhou, Quantum error correction meets continuous symmetries: Fundamental trade-offs and case studies, arXiv:2111.06360.
%When extra continuous symmetry is restricted to the finite-dimensional quantum error correction codes, a no-go theorem (known as ) tells us that the perfect correctability can be impossible, hence, approximation error correction is necessary.

\section{Preliminaries}\label{sec:pre}
In this section, to highlight the idea of our approach, we briefly review the basic working knowledge and clarify some notation.

\subsection{Quantum error correction}
%In this subsection, we briefly recall the QEC procedure and define notations 

 In a QEC procedure, the logical state is encoded into a higher-dimensional physical system and redundancy is introduced to protect against errors. As a starting point, we denote by $L$ the logical system and by $\mathcal{H}_L$ the relevant Hilbert space. The dimension of the Hilbert space is assumed to be $d_L$ and the state space is denoted by $\mathcal{D}(\mathcal{H}_L)$.  Similar definitions can be defined for other systems. The encoding is a channel $\mathcal{E}$ from the logical system $L $ to the physical system $S$. The subspace of system $S$, $\mathcal{C}=\mathcal{E}(\mathcal{D}(\mathcal{H}_L))$, is known as code space and the projector on the code space is denoted by $P$. After a noise channel $\mathcal{N}(\rho)=\sum_{i=1}^{n}A_i\rho A_i^{\dagger}$ with $\sum_{i=1}^{n}A_i^{\dagger} A_i=\mathbf{1}$, 
the encoding state is changed and we can perform a corresponding decoding channel$\mathcal{R}$ to recover the original state. An ideal QEC procedure can recover all states in the logical system perfectly, that is,
\begin{equation}
    \mathcal{R} \circ \mathcal{N} \circ \mathcal{E}=\mathcal{I}
\end{equation}
with $\mathcal{I}$ being the identity map on logical system $L$. 

%$:\mathcal{D}(\mathcal{H}^S)\rightarrow \mathcal{D}(\mathcal{H}^{S'})$, 
%$\mathcal{E}: \mathcal{D}(\mathcal{H}^L)\rightarrow \mathcal{D}(\mathcal{H}^S)$
%\mathcal{D}(\mathcal{H}^{S'})\rightarrow \mathcal{D}(\mathcal{H}^L)

The Knill-Laflamme condition is a necessary and sufficient condition for a quantum code to achieve an exact correction \cite{PhysRevA.55.900}. For a given code $\mathcal{E}$ with the projector on the code subspace $P$, the errors can be corrected if and only if
\begin{equation}
    P A_i^{\dagger} A_j P=\alpha_{ij} P
\end{equation}
holds for a corresponding non-negative Hermitian matrix $(\alpha_{ij})$. Note that when the Kraus operators of a noise channel can be described by a linear span of $\{A_i\}$, the errors caused by
this noise can also be corrected exactly.

\subsection{Wigner-Yanase skew information and its generalization}
The conventional variance quantifies the total uncertainty of the observable $H$ in the state $\rho$ and is defined as
\begin{equation}
    V(\rho,H)=\tr(\rho H^2)-(\tr\rho H)^2.
\end{equation}
As a counterpart to it, the quantity
\begin{equation}\label{eq:skew}
    I(\rho,H)=-\frac{1}{2}\tr [\sqrt{\rho},H]^2=\frac{1}{2}\norm{[\sqrt{\rho},H]}^2_2
\end{equation}
also known as the Wigner-Yanase skew information \cite{wigner1963information,lieb1973convex,luo2004skew}, can quantify the quantum uncertainty of the observable $H$ in the state $\rho$. Here $[X,Y]=XY-YX$ is the Lie product and $\norm{X}_p=(\tr(XX^{\dagger})^{p/2})^{1/p}$ is the $p$-norm. For a pure state, the skew information coincides with the variance. 
%The Wigner-Yanase skew information has several different interpretations and has been widely used in various ranges. 

%The Wigner-Yanase skew information satisfies several nice properties and are widely used in quantifying quantum resources\cite{luo2020skew}. Here we review a skew information based sum uncertainty relation and sketch its proof quickly \cite{}.

The operator $H$ in Eq.~\eqref{eq:skew} is required to be Hermitian and we can generalize to non-Hermitian case \cite{PhysRevA.98.012113}. For an arbitrary operator $K$ which can be non-Hermitian, the generalized skew information is defined as
\begin{equation}
    I(\rho,K)=\frac{1}{2}\tr[\sqrt{\rho},K][\sqrt{\rho},K]^{\dagger}=\frac{1}{2}\norm{[\sqrt{\rho},K]}^2_2.
\end{equation}
In particular, for a pure state $\ket{\phi}$, 
\begin{equation}
    I(\ketbra{\phi},K)=\frac{1}{2}\bra{\phi}KK^{\dagger}+K^{\dagger}K\ket{\phi}-\abs{\bra{\phi}K\ket{\phi}}^2.
\end{equation}
In addition, the generalized skew information can be expressed as a sum of the original skew information
\begin{equation}\label{eq:reim}
    I(\rho,K)=I(\rho,K^{\dagger})=I(\rho,\Re(K))+I(\rho,\Im(K)),
\end{equation}
where $\Re(K)=\frac{1}{2}(K+K^{\dagger})$ and $\Im(K)=\frac{1}{2i}(K-K^{\dagger})$ represent the real and imaginary components, respectively. Note that when $K$ is Hermitian, the generalized skew information degenerates to the original ones.

The original skew information satisfies a series of uncertainty relations \cite{chen2016sum,zhang2021tighter,cai2021sum}. Here we give a sum uncertainty relation based on the generalized skew information.

\begin{lemma}\label{lem}
Let $K_1,\cdots, K_N$ be a set of operators. For a state $\rho$, there is
\begin{equation}
   \sum_{j=1}^{N} I(\rho,K_j)\geq \frac{1}{N}I(\rho,\sum_{j=1}^{N} K_j).
\end{equation}
\end{lemma}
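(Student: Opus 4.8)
The plan is to reduce the statement to an elementary inequality in a Hilbert space. The key observation is that, by the very definition of the generalized skew information, $I(\rho,K)=\tfrac12\norm{[\sqrt{\rho},K]}_2^2$ is (half of) the squared Hilbert--Schmidt norm of the image of $K$ under the \emph{linear} map $K\mapsto[\sqrt{\rho},K]$, so the whole problem lives in the inner-product space of operators.

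First I would introduce $X_j:=[\sqrt{\rho},K_j]$, so that $I(\rho,K_j)=\tfrac12\norm{X_j}_2^2$. Since the commutator is linear in its second slot, $\sum_{j=1}^N X_j=[\sqrt{\rho},\sum_{j=1}^N K_j]$, and hence $I(\rho,\sum_{j=1}^N K_j)=\tfrac12\norm{\sum_{j=1}^N X_j}_2^2$. Cancelling the common factor $\tfrac12$, the lemma becomes equivalent to
\[
  \norm{\sum_{j=1}^N X_j}_2^2\;\le\;N\sum_{j=1}^N\norm{X_j}_2^2 .
\]
This is a general fact about any inner-product space: equipping the operator space with the Hilbert--Schmidt inner product $\langle A,B\rangle=\tr(A^{\dagger}B)$, whose induced norm is exactly $\norm{\cdot}_2$, one expands the left-hand side as $\sum_{j,k}\langle X_j,X_k\rangle$ and bounds each term by $\Re\langle X_j,X_k\rangle\le\norm{X_j}_2\norm{X_k}_2\le\tfrac12\big(\norm{X_j}_2^2+\norm{X_k}_2^2\big)$ via Cauchy--Schwarz and the AM--GM inequality; summing over all $N^2$ ordered pairs $(j,k)$ gives precisely $N\sum_{j=1}^N\norm{X_j}_2^2$. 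Substituting back proves the claim. (Equivalently one may invoke convexity of $t\mapsto t^2$ together with the triangle inequality, i.e. the power-mean inequality.)

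I do not expect a genuine obstacle here: the entire content is the remark that $\norm{\cdot}_2$ is a bona fide Hilbert-space norm and that $K\mapsto[\sqrt{\rho},K]$ is linear, after which the estimate is standard. An alternative route would be to deduce the non-Hermitian statement from the known sum-uncertainty relation for the ordinary (Hermitian) skew information, using $I(\rho,K)=I(\rho,\Re(K))+I(\rho,\Im(K))$ from Eq.~\eqref{eq:reim} and the fact that $\Re$ and $\Im$ commute with finite sums; but the direct Hilbert--Schmidt argument above is shorter and self-contained. For completeness I would also note that equality holds exactly when all the $X_j=[\sqrt{\rho},K_j]$ coincide, i.e. when the $K_j$ differ pairwise by operators commuting with $\sqrt{\rho}$.
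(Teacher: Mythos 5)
Your proof is correct and follows essentially the same route as the paper: both exploit linearity of $K\mapsto[\sqrt{\rho},K]$ to reduce the lemma to the elementary bound $\norm{\sum_j X_j}_2^2\le N\sum_j\norm{X_j}_2^2$, which the paper gets via the triangle inequality plus Cauchy--Schwarz on the scalars $\norm{X_j}_2$, while you get it by expanding the Hilbert--Schmidt inner product and using Cauchy--Schwarz with AM--GM termwise. The equality characterization you add (all $[\sqrt{\rho},K_j]$ equal) is a correct small bonus not stated in the paper.
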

\begin{proof}
\begin{equation}
    \begin{split}
    I(\rho,\sum_{j=1}^N K_j)&=\frac{1}{2}\norm{[\sqrt{\rho},\sum_{j=1}^N K_j]}_2^2\\
    &\leq \frac{1}{2}\Big(\sum_{j=1}^N\norm{[\sqrt{\rho},K_j]}_2\Big)^2\\
    &\leq\frac{N}{2} \sum_{j=1}^N\norm{[\sqrt{\rho},K_j]}_2^2 \\
    &= N \sum_{j=1}^N I(\rho,K_j),
    \end{split}
\end{equation}
where the first inequality is from the triangle inequality of the norm and the second inequality
is from the Cauchy-Schwarz inequality.
\end{proof}
%    &=\frac{1}{2}\sum_{i,j}\tr(X_i X_j^{\dagger})\\
   % &\leq \frac{1}{2}\sum_{i,j}\abs{\tr(X_i X_j^{\dagger})}\\
    %&=\frac{1}{2}\sum_{i,j}\abs{\langle X_j,X_i\rangle}\\
    %&\leq \frac{1}{2}\sum_{i,j}\norm{X_i}_2\norm{X_j}_2\\
    %&=(\sum_{j}  \sqrt{I(\rho,K_j)})^2

\subsection{Asymmetry measures}

Given a group $\mathbf{G}$, for any group element $g$, let $U(g)$ be the unitary operator represented in the space $\mathcal{H}$. If a state $\rho$ remains unchanged under unitary transformations induced by $\mathbf{G}$,
\begin{equation}\label{eq:symsta}
    U(g)\rho U^{\dagger}(g)=\rho, \quad \forall g\in \mathbf{G},
\end{equation}
we say that the state is symmetric with respect to $\mathbf{G}$. In quantum resource theory, the quantification of how much a state breaks this symmetry, or the measure of asymmetry, is a significant problem. Different measures of asymmetry have been proposed in the literature \cite{marvian2014extending,PhysRevA.90.062110,PhysRevA.94.052324,Li_2020}. For example, some commonly used measures of asymmetry are based on skew information and von Neumann entropy. Here we mainly focus on Lie groups and only review the asymmetric measure given by skew information. Suppose the Lie algebra of the Lie group $\mathbf{G}$ has an orthonormal base $\{H_p:p=1,\cdots,d_{\mathbf{G}}\}$, where $d_{\mathbf{G}}$ is the dimension of the Lie algebra. All generators can be written as linear combinations of the elements in this base. The sum of the skew information,
\begin{equation}
    N_{\mathbf{G}}(\rho)=\sum_{p=1}^{d_{\mathbf{G}}} I(\rho,H_p),
\end{equation}
quantifies the asymmetry of the state $\rho$ with respect to the group $\mathbf{G}$ \cite{marvian2014extending,Li_2020}. The asymmetry measure possesses the following desirable properties.

(i) Here $N_{\mathbf{G}}(\rho)\geq 0$ and the equality holds if and only if the state commutes with all generators, which indicates that the state is symmetric with respect to $\mathbf{G}$.

(ii) For all $ g$, there is 
    \begin{equation}
        N_{\mathbf{G}}(\rho)=N_{\mathbf{G}}(U(g)\rho U^{\dagger}(g)).
    \end{equation}

(iii) Here $N_{\mathbf{G}}(\rho)$ is convex in the sense that 
    \begin{equation}
        N_{\mathbf{G}}(\sum_i \lambda_i \rho_i)\leq \sum_i \lambda_i N_{\mathbf{G}}(\rho_i),
    \end{equation}
    where $\lambda_i\geq 0$ and $\sum_i \lambda_i=1$.

Items (i) and (iii) can be directly deduced from the properties of the skew information \cite{Li_2020}. We only need to prove item (ii). For all $g$, from the unitary invariance of the skew information, we can obtain
\begin{equation}
   I(U(g)\rho U^{\dagger}(g),H_p)=I(\rho , U^{\dagger}(g) H_p  U(g)).
\end{equation}
For any unitary operator $U(g)$ and the generator $H_p$, $U^{\dagger}(g) H_p  U(g)$ is also a generator in Lie algebra \cite{hall2013lie}. Consequently, $\{U^{\dagger}(g) H_p  U(g):p=1,\cdots, d_{\mathbf{G}}\}$ forms an orthonormal base of the Lie algebra. Since the sum of the skew information does not depend on the choice of the orthonormal base \cite{PhysRevA.73.022324}, item (ii) holds.

%, and being monotonic under general quantum operations 
\section{approximate quantum error correction}\label{sec:app}

The exact correctability is a strong restriction to practical codes. As a result, we consider approximate QEC codes in some cases and many quantifiers have been developed to evaluate the performance of the approximate error correction\cite{Zhou2021newperspectives,PRXQuantum.3.020314,PhysRevA.71.062310,PhysRevA.81.062342,PhysRevA.99.022313,PhysRevA.107.032422}. In an approximate QEC process, we need to find a proper recovery channel such that the composite operation $\mathcal{R} \circ \mathcal{N} \circ \mathcal{E}$ is close enough to the identity map, which demonstrates that all states can be nearly recovered. To characterize the performance of the approximate QEC codes, we first recall how to quantify the ``distance'' between two channels.

For two states $\rho$ and $\sigma$ the distance is quantified by fidelity
\begin{equation}
    F(\rho,\sigma)=\norm{\sqrt{\rho},\sqrt{\sigma}}_1=\tr\sqrt{\sqrt{\rho}\sigma\sqrt{\rho}}.
\end{equation}
The fidelity and the trace distance are closely related. The two measures are qualitatively equivalent since they satisfy the inequalities \cite{nielsen2002quantum},
\begin{equation}\label{eq:ineqf}
    1-\frac{1}{2}\norm{\rho-\sigma}_1\leq F(\rho,\sigma)\leq \sqrt{1-\frac{1}{4}\norm{\rho-\sigma}^2_1}.
\end{equation}
For two channels $L$, $\Lambda$ and $\Lambda'$ in the system, the entanglement fidelity 
\begin{equation}
\begin{split}
&F_e(\Lambda,\Lambda')\\
&=F\Big((\Lambda_L\otimes \mathcal{I}_R)(\ketbra{\psi}_{LR}),(\Lambda'_L\otimes \mathcal{I}_R)(\ketbra{\psi}_{LR})\Big)
\end{split}
\end{equation}
measures the closeness between these two channels, where $R$ is the reference system identical to system $L$ and $\ket{\psi}_{LR}=1/\sqrt{d_L}\sum_{k=1}^{d_L}\ket{k}_{L}\ket{k}_{R}$ is the maximally entangled state. As a special case, we take $\Lambda'$ as the identity map and we obtain the entanglement fidelity of the channel $\Lambda$,
\begin{equation}\label{eq:enfide}
\begin{split}
F_e(\Lambda)&=F_e(\Lambda,\mathcal{I})\\
&=\sqrt{\bra{\psi}_{LR}(\Lambda_L\otimes \mathcal{I}_R)(\ketbra{\psi}_{LR})\ket{\psi}_{LR}}.  
\end{split}
%^{\frac{1}{2}}.
\end{equation}

With the above entanglement fidelity, now we can characterize the performance of an encoding channel $\mathcal{E}$ under noise $\mathcal{N}$ by the quantity defined as
\begin{equation}\label{eq:encha}
    %\begin{split}
   f_e(\mathcal{N}\circ\mathcal{E}) =\underset{\mathcal{R}}{\max}F_e(\mathcal{R} \circ\mathcal{N}\circ\mathcal{E}).
        %F_{\rm{worst}}(\Lambda)&=\underset{\mathcal{R}}{\max}F_{\rm{worst}}(\mathcal{R} \circ\Lambda).
    %\end{split}
\end{equation}
 When $f_e(\mathcal{N}\circ\mathcal{E})=1$, we can find a channel $\mathcal{R}$ such that all states are recovered perfectly.
%Equivalently, we can also consider $\epsilon_e(\mathcal{N}\circ\mathcal{E})=1-f_e(\mathcal{N}\circ\mathcal{E})$ as a characterization of the approximation.

The maximization problem in Eq.~\eqref{eq:encha} is generally difficult since the optimization is over all channels. Fortunately, we can study the problem from the view of leaking information to the environment via the method of complementary channels \cite{PhysRevLett.104.120501}. As shown in Fig.~\ref{figure}, the channel $(\mathcal{N}\circ\mathcal{E})_{L\rightarrow S}$ has an isometry dilation $V_{L\rightarrow SE}$ with environment system $E$ such that
\begin{equation}
   \mathcal{N}\circ\mathcal{E}(\rho_L)=\tr_{E}\Big(V_{L\rightarrow SE}\rho_L V_{L\rightarrow SE}^{\dagger}\Big).
\end{equation}
Then the complementary channel is defined as
\begin{equation}
    \widehat{\mathcal{N}\circ\mathcal{E}}(\rho_L)=\tr_{S}\Big(V_{L\rightarrow SE}\rho_L V_{L\rightarrow SE}^{\dagger}\Big).
\end{equation}
The optimization problem \eqref{eq:encha} has an equivalent form \cite{PhysRevX.10.041018}
\begin{equation}\label{eq:com}
    f_e(\mathcal{N}\circ\mathcal{E})
    =\underset{\ket{\zeta}}{\max}F_e(\widehat{\mathcal{N}\circ\mathcal{E}},\mathcal{T}_{\zeta}),
\end{equation}
where $\mathcal{T}_{\zeta}(\cdot)=\tr(\cdot)\ketbra{\zeta}$ is a constant channel. 

%First, we recall a result and its proof given in reference \cite{}. For a channel $\Lambda :\mathcal{H}^L\rightarrow \mathcal{H}^{L'}$, as shown in Figure it has a unitary dilation $U_{LE}$ where $\mathcal{H}^{E}$ represents environment system \cite{}. For an input state $\rho$, there is
%\begin{equation}
 %   \Lambda(\rho_L)=\tr_E U_{LE}(\rho_L\otimes\ketbra{0}_E)U_{LE}^{\dagger}.
%\end{equation}
\begin{figure}[hbtp!]
	\includegraphics[scale=0.25]{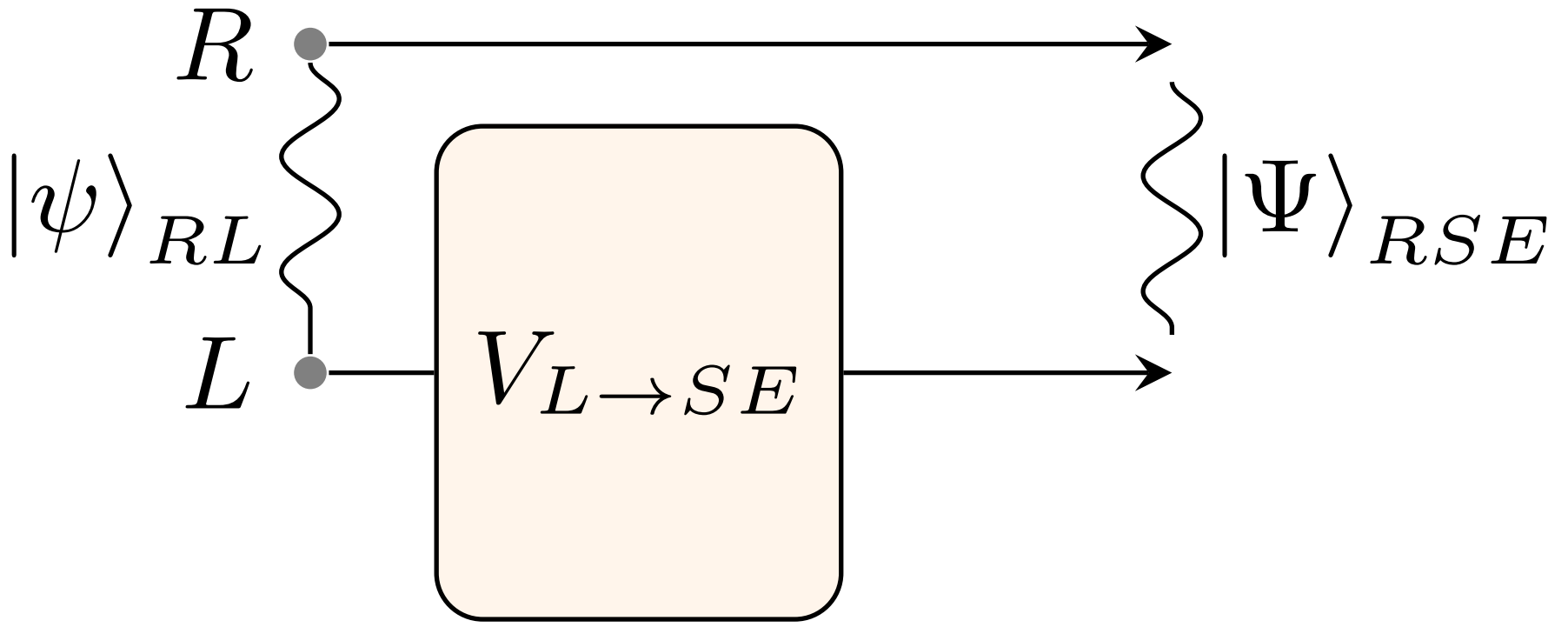}
	\caption{The $V_{L\rightarrow SE}$ is a Stinespring dilation of the channel $(\mathcal{N}\circ \mathcal{E})_{L\rightarrow S}$ with an environment system $E$. The input is a maximally entangled state of the logical system $L$ and the reference system $R$. The output is denoted by $\ket{\Psi}_{RSE}$.}
 \label{figure}
\end{figure}

With the method of complementary channels, we give a lower bound of the entanglement fidelity $f_e$ for generalized encoding and noise channels which extends the results in Ref. \cite{klesse2007approximate}. 
\begin{lemma}
Suppose the channel $(\mathcal{N}\circ\mathcal{E})_{L\rightarrow S}$ has a Stinespring dilation $V_{L\rightarrow SE}$, as shown in Fig.~\ref{figure}. Let
\begin{equation}
    \ket{\Psi}_{SER}=(V_{L\rightarrow SE}\otimes \mathbf{1}_{R})\ket{\psi}_{LR},
\end{equation}
and denote by $\rho_{RE}$, $\rho_R$ and $\rho_E$ the reduced states of $\ket{\Psi}_{RSE}$ on $RE$, $R$, and $E$, respectively. The quantity $f_e$ satisfies the inequality
\begin{equation}\label{eq:lem}
\begin{split}
1-f_e(\mathcal{N}\circ\mathcal{E})&\leq \frac{1}{2}\norm{\rho_{RE} -\rho_R\otimes\rho_E}_{1}\\
&\leq \frac{\sqrt{d_L d_E}}{2}\norm{\rho_{RE} -\rho_R\otimes\rho_E}_{2},
\end{split}
\end{equation}
where $d_E$ is the dimension of the environment system $E$. The equality holds if and only if $\rho_{RE}=\rho_{R}\otimes \rho_E$.
\end{lemma}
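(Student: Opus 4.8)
The plan is to obtain \eqref{eq:lem} as a short consequence of the complementary-channel identity \eqref{eq:com}, which has already traded the hard maximisation over all recoveries in \eqref{eq:encha} for a comparison between $\widehat{\mathcal N\circ\mathcal E}$ and a ``forgetful'' (constant) channel. Because that identity is itself a maximisation, \emph{any} admissible comparison channel furnishes a lower bound on $f_e$, and I would use the constant channel $\mathcal T(\cdot)=\tr(\cdot)\,\rho_E$ whose output is the actual reduced state $\rho_E=\tr_{SR}\ketbra{\Psi}$. A direct computation then identifies the two Choi-type states involved: by the definitions of $\widehat{\mathcal N\circ\mathcal E}$ and of $\ket{\Psi}_{SER}$ one has $(\widehat{\mathcal N\circ\mathcal E}\otimes\mathcal I_R)(\ketbra{\psi}_{LR})=\tr_S\ketbra{\Psi}_{SER}=\rho_{RE}$, while $(\mathcal T\otimes\mathcal I_R)(\ketbra{\psi}_{LR})=\rho_E\otimes\tr_L\ketbra{\psi}_{LR}=\rho_R\otimes\rho_E$ (using $\rho_R=\mathbf 1_R/d_L$). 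Hence $f_e(\mathcal N\circ\mathcal E)\geq F(\rho_{RE},\rho_R\otimes\rho_E)$.

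With that bound in hand the rest is routine. The first inequality of \eqref{eq:lem} is exactly the lower Fuchs--van de Graaf bound of \eqref{eq:ineqf} applied to $\rho_{RE}$ and $\rho_R\otimes\rho_E$, giving $1-F(\rho_{RE},\rho_R\otimes\rho_E)\leq\frac12\norm{\rho_{RE}-\rho_R\otimes\rho_E}_1$. The second inequality is the generic norm comparison $\norm{X}_1\leq\sqrt{\operatorname{rank}X}\,\norm{X}_2$ (Cauchy--Schwarz on the singular values of $X$), specialised to $X=\rho_{RE}-\rho_R\otimes\rho_E$, which acts on $\mathcal H_R\otimes\mathcal H_E$ of dimension $d_Rd_E=d_Ld_E$, so that $\operatorname{rank}X\leq d_Ld_E$. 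Chaining these three estimates yields \eqref{eq:lem}.

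For the equality claim: if $\rho_{RE}=\rho_R\otimes\rho_E$ then both norms vanish, which forces $0\leq 1-f_e\leq 0$, i.e.\ $f_e=1$, and \eqref{eq:lem} is saturated. Conversely, saturation of the first inequality forces each step above to be tight, in particular $F(\rho_{RE},\rho_R\otimes\rho_E)=1-\frac12\norm{\rho_{RE}-\rho_R\otimes\rho_E}_1$; here I would invoke that $\rho_{RE}$ and $\rho_R\otimes\rho_E$ have identical marginals, so $\operatorname{supp}(\rho_{RE})\subseteq\operatorname{supp}(\rho_R)\otimes\operatorname{supp}(\rho_E)$, and under this support condition the equality case of \eqref{eq:ineqf} collapses to $\rho_{RE}=\rho_R\otimes\rho_E$. (Equivalently: $f_e=1$ iff $\mathcal N\circ\mathcal E$ is exactly correctable iff, via \eqref{eq:com}, the reference decouples from the environment, i.e.\ $\rho_{RE}=\rho_R\otimes\rho_E$.)

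The step I expect to carry the real weight is the lower bound $f_e\geq F(\rho_{RE},\rho_R\otimes\rho_E)$ of the first paragraph. Identity \eqref{eq:com} is phrased with a comparison channel of \emph{pure} output $\ketbra{\zeta}$, whereas $\rho_E$ is in general mixed, so one needs the stronger statement that $f_e\geq F(\rho_{RE},\rho_R\otimes\omega)$ for an \emph{arbitrary} state $\omega$. The way I would establish this is the Uhlmann/recovery argument underlying Ref.~\cite{klesse2007approximate}: choose purifications of $\rho_R\otimes\omega$ and of $\rho_{RE}$ (the latter being $\ket{\Psi}_{SRE}$ itself), let Uhlmann's theorem produce the isometry on $S$ aligning them, and read off from that isometry an explicit recovery channel $\mathcal R$ with $F_e(\mathcal R\circ\mathcal N\circ\mathcal E)\geq F(\rho_{RE},\rho_R\otimes\omega)$; taking $\omega=\rho_E$ gives exactly what is needed, and this refinement is what upgrades the cited bound to arbitrary $\mathcal E$ and $\mathcal N$. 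Everything else---the inequalities of \eqref{eq:ineqf}, the $\norm{\cdot}_1$--$\norm{\cdot}_2$ comparison, and the marginal-matching observation in the equality case---is standard.
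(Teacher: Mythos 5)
Your proposal follows the same skeleton as the paper's proof: pass to the complementary channel via \eqref{eq:com}, lower-bound $f_e$ by $F(\rho_{RE},\rho_R\otimes\rho_E)$, apply the lower Fuchs--van de Graaf inequality of \eqref{eq:ineqf}, and finish with $\norm{X}_1\leq\sqrt{{\rm rank}(X)}\norm{X}_2$ together with ${\rm rank}(\rho_{RE}-\rho_R\otimes\rho_E)\leq d_Ld_E$. The genuine difference is how the central step $f_e\geq F(\rho_{RE},\rho_R\otimes\rho_E)$ is justified. The paper simply writes $f_e=\max_{\ket{\zeta}}F(\rho_{RE},\rho_R\otimes\ketbra{\zeta}_E)\geq F(\rho_{RE},\rho_R\otimes\rho_E)$, i.e.\ it treats the mixed state $\rho_E$ as if it were admissible in a maximization over pure $\ket{\zeta}$; as you correctly note, this is not an instance of a valid general fidelity inequality (concavity of $F$ in the second argument runs the wrong way, and a pure-restricted maximum need not dominate the fidelity with a mixed product state), so the step really requires either reading \eqref{eq:com} with arbitrary constant channels or an explicit recovery construction. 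Your Uhlmann/Klesse-style argument, producing a recovery map with $F_e(\mathcal{R}\circ\mathcal{N}\circ\mathcal{E})\geq F(\rho_{RE},\rho_R\otimes\omega)$ for arbitrary $\omega$ and then setting $\omega=\rho_E$, supplies exactly the missing justification, so your route is the more careful one (at the cost of invoking Uhlmann's theorem and minding the dimension of the purifying system, which can be handled by appending an ancilla inside $\mathcal{R}$). The identifications $(\widehat{\mathcal{N}\circ\mathcal{E}}\otimes\mathcal{I}_R)(\ketbra{\psi}_{LR})=\rho_{RE}$ and $(\mathcal{T}\otimes\mathcal{I}_R)(\ketbra{\psi}_{LR})=\rho_R\otimes\rho_E$ are correct. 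One caveat: your converse direction for the equality claim is not actually established --- the lower bound in \eqref{eq:ineqf} can be saturated by distinct (e.g.\ orthogonal) states, and the support argument you sketch does not force $\rho_{RE}=\rho_R\otimes\rho_E$ --- but the paper offers no proof of that part either, so this does not put you behind the paper's own argument; the ``if'' direction and the main inequality chain are sound.
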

%\geq F(\rho_{RE},\rho_R\otimes\rho_E)
%In reference \cite{}, the lemma is proved by constructing a specific recovery map. Beyond this proof, we give a simpler proof here by using the complementary channel.

\begin{proof}
From Eq.~\eqref{eq:com}, we obtain
\begin{equation}
    \begin{split}
        &f_e(\mathcal{N}\circ\mathcal{E})\\
        &=\underset{\ket{\zeta}}{\max}F(\widehat{\mathcal{N}\circ\mathcal{E}}\otimes\mathcal{I}(\ketbra{\psi}_{LR}), \mathcal{T}_{\zeta}\otimes\mathcal{I}(\ketbra{\psi}_{LR}))\\
        &=\underset{\ket{\zeta}}{\max}F(\rho_{RE},\frac{\mathbf{1}_R}{d_L}\otimes \ketbra{\zeta}_E)\\
        &=\underset{\ket{\zeta}}{\max}F(\rho_{RE},\rho_R\otimes \ketbra{\zeta}_E)\\
        &\geq F(\rho_{RE},\rho_{R}\otimes \rho_E)\\
        &\geq 1-\frac{1}{2}\norm{\rho_{RE}-\rho_{R}\otimes \rho_E}_{1},
    \end{split}
\end{equation}
%where $\rho_{R}=\tr_{E}\rho_{RE}=\frac{\mathbf{1}_R}{d}$ and $\rho_{E}=\tr_{R}\rho_{RE}$.
where the last inequality is from Eq.~\eqref{eq:ineqf}. 

Recall that for an operator $A$, the $1$-norm and the $2$-norm have the relation 
\begin{equation}
    \norm{A}_1\leq \sqrt{{\rm rank}(A)}\norm{A}_2.
\end{equation}
According to this relation and ${\rm rank}(\rho_{RE}-\rho_{R}\otimes \rho_E)\leq d_L d_E$, we obtain the remaining inequality in Eq.~\eqref{eq:lem}.
\end{proof}

In general, the fidelity and the $1$-norm are difficult to calculate since we need spectral decomposition. In comparison, the $2$-norm is easier to calculate. After a tedious calculation of the $2$-norm in Eq.~\eqref{eq:lem} presented in Appendix A, we obtain a lower bound of $f_e$.
\begin{observation}\label{ob:low}
Let $\mathcal{E}_{L\rightarrow S}$ and $\mathcal{N}_{S\rightarrow S}$ be the encoding and noise channels, respectively. Suppose they have specific forms,
\begin{equation}
\begin{split}
    \mathcal{E}(\rho)&=\sum_{s=1}^{m}E_{s}\rho E_{s}^{\dagger},\\
    \mathcal{N}(\sigma)&=\sum_{i=1}^{n}A_{i}\sigma A_{i}^{\dagger},
\end{split}
\end{equation}
where $\sum_{s=1}^{m}E_s^{\dagger} E_s=\mathbf{1}_L$ and $\sum_{i=1}^{n}A_i^{\dagger} A_i=\mathbf{1}_S$. Define $O=\sum_{s=1}^{m}E_s E_s^{\dagger}$. The entanglement fidelity $f_e$ has a lower bound,
\begin{equation}
    f_e(\mathcal{N}\circ\mathcal{E})\geq 1-\epsilon(\mathcal{N}\circ\mathcal{E}),
\end{equation}
where
\begin{equation}
\begin{split}
    \epsilon(\mathcal{N}\circ\mathcal{E})&=\sqrt{\frac{mn}{4 d_L}}\Big(\sum_{i,j=1}^{n}\tr(A_i^{\dagger}A_j O A_j^{\dagger} A_i O)\\
    &-\frac{1}{d_L}\sum_{i,j=1}^{n}\sum_{s,t=1}^{m}\abs{\tr(A_i^{\dagger}A_jE_t E_s^{\dagger})}^2\Big)^{1/2}
\end{split}
\end{equation} 
and we call $\epsilon$ the infidelity.
\end{observation}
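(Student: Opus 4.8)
The plan is to start from Lemma~2, which already gives
$1-f_e(\mathcal{N}\circ\mathcal{E})\le \tfrac{\sqrt{d_L d_E}}{2}\norm{\rho_{RE}-\rho_R\otimes\rho_E}_2$, and simply compute the right-hand side explicitly in terms of the Kraus operators $\{E_s\}$ and $\{A_i\}$. So the whole argument is bookkeeping: pick a concrete Stinespring dilation of $\mathcal{N}\circ\mathcal{E}$, read off $\rho_{RE}$, $\rho_R$, $\rho_E$, and evaluate the Hilbert--Schmidt norm $\norm{\rho_{RE}-\rho_R\otimes\rho_E}_2^2 = \tr(\rho_{RE}^2) - 2\tr\big(\rho_{RE}(\rho_R\otimes\rho_E)\big) + \tr(\rho_R^2)\tr(\rho_E^2)$.

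First I would fix the dilation. The composite channel has Kraus operators $\{A_i E_s\}_{i,s}$, so the environment $E$ can be taken as a tensor product $E_1\otimes E_2$ of dimensions $n$ and $m$ respectively (hence $d_E=mn$, which is where the $\sqrt{mn/d_L}$ prefactor in $\epsilon$ will come from after pulling the $1/2$ together with the factor hidden in $\rho_R = \mathbf{1}_R/d_L$). Concretely, $V_{L\to SE}\ket{k}_L = \sum_{i,s} (A_i E_s\ket{k})_S \otimes \ket{i}_{E_1}\otimes\ket{s}_{E_2}$. Feeding in $\ket{\psi}_{LR}=\tfrac{1}{\sqrt{d_L}}\sum_k\ket{k}_L\ket{k}_R$ and tracing out $S$ gives $\rho_{RE}$ with matrix elements $\bra{k}_R\bra{i,s}\rho_{RE}\ket{k'}_R\ket{i',s'} = \tfrac{1}{d_L}\bra{k'}E_{s'}^\dagger A_{i'}^\dagger A_i E_s\ket{k}$. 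Then $\rho_R=\mathbf{1}_R/d_L$ (the channel is trace-preserving on the $R$ side), and $\rho_E$ has entries $\bra{i,s}\rho_E\ket{i',s'} = \tfrac{1}{d_L}\tr(A_{i'}^\dagger A_i E_s E_{s'}^\dagger)$.

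Next I would plug these into the three traces. The cross term $\tr(\rho_{RE}(\rho_R\otimes\rho_E)) = \tfrac{1}{d_L}\tr(\rho_E^2)$ because $\rho_R=\mathbf{1}/d_L$, and one checks $\tr(\rho_{RE}^2)$ and $\tr(\rho_E^2)$ both reduce to the same double-sum $\tfrac{1}{d_L^2}\sum_{i,j,s,t}\abs{\tr(A_i^\dagger A_j E_t E_s^\dagger)}^2$ — so the cross term and the $\tr(\rho_R^2)\tr(\rho_E^2) = \tfrac{1}{d_L}\tr(\rho_E^2)$ term combine with a sign that leaves a single negative copy of this sum. The term $\tr(\rho_{RE}^2)$, by contrast, must be reorganized: summing the $R$-indices $k,k'$ first turns $\sum_{k,k'}\bra{k'}E_{s'}^\dagger A_{i'}^\dagger A_i E_s\ket{k}\bra{k}E_s^\dagger A_i^\dagger A_{i'}E_{s'}\ket{k'}$ into $\tr(E_{s'}^\dagger A_{i'}^\dagger A_i E_s E_s^\dagger A_i^\dagger A_{i'} E_{s'})$, and then summing over $s,s'$ with $\sum_s E_s E_s^\dagger = O$ (note: this is $EE^\dagger$, not the identity, since $\mathcal{E}$ need not be unital) collapses it to $\tfrac{1}{d_L^2}\sum_{i,j}\tr(A_i^\dagger A_j O A_j^\dagger A_i O)$, which is exactly the first term inside $\epsilon$. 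Collecting, $\norm{\rho_{RE}-\rho_R\otimes\rho_E}_2^2 = \tfrac{1}{d_L^2}\big(\sum_{i,j}\tr(A_i^\dagger A_j O A_j^\dagger A_i O) - \tfrac{1}{d_L}\sum_{i,j,s,t}\abs{\tr(A_i^\dagger A_j E_t E_s^\dagger)}^2\big)$, and multiplying by $\tfrac{\sqrt{d_L d_E}}{2}=\tfrac{\sqrt{d_L mn}}{2}$ and using $1/d_L^2\cdot d_L = 1/d_L$ yields $\epsilon$ as stated.

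The main obstacle is purely the index gymnastics in the $\tr(\rho_{RE}^2)$ term: one has to be careful about the order of summation (doing the $R$-indices first to form a trace, then the $E_2$-indices $s,s'$ to invoke $\sum_s E_s E_s^\dagger = O$) and about complex conjugates coming from the second factor of $\rho_{RE}$ in the square. A secondary subtlety worth flagging is that the bound depends a priori on the choice of Stinespring dilation (different $d_E$), but any minimal-or-larger dilation related by an isometry on $E$ leaves $\rho_{RE}$ unchanged up to a local isometry on $E$, hence leaves both the $2$-norm and the relevant rank bound $\le d_L d_E$ effectively controlled; taking the canonical Kraus-index dilation with $d_E=mn$ is the natural choice that makes the formula clean. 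Everything else — the $1$-norm/$2$-norm inequality, the fidelity/trace-distance inequality — is already supplied by Lemma~2, so no new ideas are needed beyond the computation, which is relegated to Appendix~A.
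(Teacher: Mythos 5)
Your proposal is correct and follows essentially the same route as the paper's Appendix A: the canonical Kraus-index dilation with $d_E=mn$, the reduced states $\rho_{RE}$, $\rho_R=\mathbf{1}_R/d_L$, $\rho_E$, and a direct evaluation of $\norm{\rho_{RE}-\rho_R\otimes\rho_E}_2^2$ fed into Lemma~2 (the paper organizes the same computation as $\tr DD^{\dagger}$ for $D=\Lambda(\rho_{RE})-\Lambda(\rho_R\otimes\rho_E)$ rather than expanding the square, a purely cosmetic difference). One local slip: the clause asserting that $\tr(\rho_{RE}^2)$ reduces to the double sum $\frac{1}{d_L^2}\sum_{i,j,s,t}\abs{\tr(A_i^{\dagger}A_j E_t E_s^{\dagger})}^2$ should instead refer to the cross term and $\tr(\rho_R^2)\tr(\rho_E^2)$ (each equal to $\frac{1}{d_L}\tr(\rho_E^2)$); your subsequent reorganization of $\tr(\rho_{RE}^2)$ into $\frac{1}{d_L^2}\sum_{i,j}\tr(A_i^{\dagger}A_j O A_j^{\dagger}A_i O)$ and the collected final formula are right.
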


This observation gives a quantitative description of the performance of an approximate QEC. When $\epsilon \ll 1$, the errors can be corrected approximately. The defined infidelity $\epsilon$ also characterizes the correlation between system $R$ and system $E$. As the environment becomes more correlated with the reference system which contains the encoded quantum information, more information leaks into the environment, which can result in the degradation of the protected information. 
%From this point, infidelity quantifies the information leakage to the environment.
%This lemma gives a lower bound of entanglement fidelity. Only if $\rho_{RE}=\rho_{R}\otimes \rho_E$, there exists a recovery map that can perform perfect error correction. When the distance between these two states is small, i.e. $\norm{\rho_{RE} -\rho_R\otimes\rho_E}_{\tr}\ll 0$, approximate error correction can be realized.

\section{Covariance symmetry}\label{sec:nonc}
A channel $\mathcal{E}$ from system $L$ to system $S$ is called covariant with group $\mathbf{G}$, if for all $ g\in \mathbf{G}$ and all $ \rho \in \mathcal{D}(\mathcal{H}_L)$ there is
\begin{equation}
    \mathcal{E}\Big(U_L(g)\rho U_L(g)^{\dagger}\Big)=U_S(g)\mathcal{E}(\rho) U_S^{\dagger}(g),
\end{equation}
where $U_L(g)$ and $U_S(g)$ are unitary representations of group element $g$ on space $\mathcal{H}_L$ and $\mathcal{H}_S$, respectively. We can also say that the channel is symmetric with respect to $\mathbf{G}$. The covariant channel is intimately connected with the symmetric state and the Choi representation builds this bridge. More explicitly, the covariance symmetry of a channel is equal to the group symmetry of the corresponding Choi state \cite{d2004extremal}. Now we explain this equivalence relation in detail.

Recall that there exists a one-to-one correspondence between the channel and the Choi state
\begin{equation}
    \begin{split}
        \Phi_{\mathcal{E}}&=(\mathcal{I}_{L}\otimes\mathcal{E}_{R\rightarrow S})(\ketbra{\psi}_{LR}),\\
        \mathcal{E}_{L\rightarrow S}(\rho)&= d_L \tr_{L}[(\rho_L^T \otimes\mathbf{1}_S)\Phi_{\mathcal{E}}],
    \end{split}
\end{equation}
where $T$ represents the transposition.
%\tr_L(\rho_L^T\otimes \mathbf{1}_S)\Big(\Phi_{\mathcal{E}}- (U_L^{T}(g)\otimes U_S^{\dagger}(g))\Phi_{\mathcal{E}}(U_L^{*}(g)\otimes U_S(g))\Big),
Suppose the channel $\mathcal{E}$ is $\mathbf{G}$ covariant. Then for all $ \rho $ and all $ g $, we can obtain
\begin{equation}
    \begin{split}
        0=&\frac{1}{d_L}\mathcal{E}(\rho)-\frac{1}{d_L}U_S^{\dagger}(g)\mathcal{E}\Big(U_L(g)\rho U_L^{\dagger}(g)\Big)U_S(g)\\
        =& \tr_{L}[(\rho_L^T\otimes \mathbf{1}_S)\Phi_{\mathcal{E}}]\\
        -&  U_S^{\dagger}(g) \tr_{L}\{[ U_L^{*}(g)\rho_L^T U_L^{T}(g)\otimes \mathbf{1}_S]\Phi_{\mathcal{E}}\} U_{S}(g),
    \end{split}
\end{equation}
where $*$ is the conjugate operation. Therefore, 
\begin{equation}
    \Phi_{\mathcal{E}}=[U_L^{T}(g)\otimes U_S^{\dagger}(g)]\Phi_{\mathcal{E}}[U_L^{*}(g)\otimes U_S(g)] \forall g.
\end{equation}
This implies that the Choi state $\Phi_{\mathcal{E}}$ is symmetric with respect to the unitary representation $\{U_L^{*}(g)\otimes U_S(g)\}$. Hence, we can quantify the noncovariance of a channel from the asymmetry of its Choi state. Concretely, noncovariance of the channel $\mathcal{E}$ is defined as
\begin{equation}\label{eq:covmea}
    N_{\mathbf{G}}(\mathcal{E})=N_{\mathbf{G}}(\Phi_{\mathcal{E}})=\sum_{p=1}^{d_{\mathbf{G}}}I(\Phi_{\mathcal{E}},H_p),
\end{equation}
which has been thoroughly studied as discussed in Sec. \ref{sec:pre}.

\section{Isometric encoding}\label{sec:iso}
In this section we investigate infidelity, noncovariance, and their trade-off relation of a particular example.
\subsection{Infidelity of the QEC}
The isometric encoding channel is of the form
$\mathcal{E}(\rho)=W\rho W^{\dagger}$, with $W^{\dagger}W=\mathbf{1}_{L}$. The projector onto the coding space is $P=WW^{\dagger}$ and the Choi state is
\begin{equation}
\begin{split}
  \Phi_{\mathcal{E}}&=(\mathcal{I}_{L}\otimes\mathcal{E}_{R\rightarrow S})(\ketbra{\psi}_{LR})\\
    &=\ketbra{\Tilde{\psi}}_{LS},   
\end{split}
\end{equation}
where $\ket{\Tilde{\psi}}_{LS}=(\mathbf{1}_L\otimes W_{R\rightarrow S})\ket{\psi}_{LR}$.
The noise channel is assumed to be of the general form $\mathcal{N}_{S\rightarrow S}(\rho)=\sum_{i=1}^{n}A_i \rho A_i^{\dagger}$.

According to Observation \ref{ob:low}, the square of infidelity is

%\begin{equation}
 %   D=\frac{1}{d}\sum_{i,j=1}^{n}\Big(P A_i^{\dagger} A_j P-\frac{1}{d}(\tr P A_i^{\dagger} A_j P)P\Big)\otimes\ketbra{i}{j}.
%\end{equation}
 %The Frobenius norm is 
\begin{equation*}
\begin{split}
    &\epsilon^2(\mathcal{N}\circ\mathcal{E})\\
    &=\frac{n}{4 d_L}\sum_{i,j=1}^{n}\tr(P A_j^{\dagger} A_i P A_i^{\dagger} A_j )-\frac{1}{d_L}\abs{\tr (P A_i^{\dagger}A_j)}^2.   
\end{split}
\end{equation*}
Notice that when the Knill-Laflamme condition is satisfied, namely, $P(A_i^{\dagger} A_j )P=\lambda_{ij}P$ holds for all $i$ and $j$ with some constants $\lambda_{ij}$, then infidelity $\epsilon=0$ and perfect error correction can be realized.

We define $K_{ij}=P A_i^{\dagger}A_j P$ and the infidelity can be written in the form of the generalized skew information
\begin{equation*}
    \begin{split}
       &\frac{4d_L \epsilon^2(\mathcal{N}\circ \mathcal{E})}{n} \\
       &=\sum_{i,j=1}^{n}\frac{1}{2}\tr( K_{ij}^{\dagger} K_{ij}+K_{ij} K_{ij}^{\dagger} )-\sum_{i,j=1}^{n}\frac{1}{d_L}\abs{\tr K_{ij}}^2\\
       &= \sum_{i,j=1}^{n}\frac{1}{2}\tr(W^{\dagger} (K_{ij}^{\dagger} K_{ij}+K_{ij} K_{ij}^{\dagger}) W)\\
       &-\sum_{i,j=1}^{n}\frac{1}{d_L}\abs{\tr(W^{\dagger} K_{ij}W)}^2\\
         &= \sum_{i,j=1}^{n}\sum_{k,l=1}^{d_L}\frac{1}{2}\bra{k}W^{\dagger}(K_{ij}^{\dagger} K_{ij}+K_{ij} K_{ij}^{\dagger}) W\ket{l}\braket{k}{l}\\
         &-\sum_{i,j=1}^{n}\abs{ \sum_{k,l=1}^{d_L}\bra{k}W^{\dagger}K_{ij} W\ket{l}\braket{k}{l}}^2\\
         &=d_L\sum_{i,j=1}^{n} \Big(\frac{1}{2}\bra{\psi}\mathbf{1}_L\otimes W^{\dagger} (K_{ij}^{\dagger}K_{ij}+ K_{ij} K_{ij}^{\dagger})\mathbf{1}_L\otimes W\ket{\psi}\\
         &-\abs{\bra{\psi}(\mathbf{1}_L\otimes W^{\dagger})(\mathbf{1}_L \otimes K_{ij})(\mathbf{1}_L\otimes W)\ket{\psi}}^2\Big)\\
         %&=d_L \sum_{i,j=1}^{n}\bra{\Tilde{\psi}}\mathbf{1}_L \otimes K_{ij}^{\dagger}K_{ij}\ket{\Tilde{\psi}}-\abs{\bra{\Tilde{\psi}}\mathbf{1}_L \otimes K_{ij}\ket{\Tilde{\psi}}}^2\\
         &=d_L \sum_{i,j=1}^{n}I\Big(\ketbra{\Tilde{\psi}},\mathbf{1}_L \otimes K_{ij}\Big).
    \end{split}
\end{equation*}

\subsection{trade-off relation between infidelity and noncovariance}
%For Lie group $\mathbf{G}$ with Lie algebra $\mathcal{L}_{\mathbf{G}}$, let $\{H_L^p:p=1,\cdots,d_{\mathbf{G}}\}$ be an orthonormal base of Lie algebra corresponding to the unitary representation $\{U_L^*(g)\}$ in the logical system and suppose $H_L^p$ is the representation of the element $l_p\in \mathcal{L}_{\mathbf{G}}$. In the physical system, Lie group has the unitary representation $\{U_S(g)\}$, and we assume $H_S^p$ is the associated representation of the element $l_p\in \mathcal{L}_{\mathbf{G}}$. The set $\{H_S^p:p=1,\cdots,d_{\mathbf{G}}\}$ also forms an orthonormal base. In Hilbert space $\mathcal{H}_L\otimes \mathcal{H}_S$, $\{U_L^*(g)\otimes U_S(g)\}$ gives a unitary representation of the Lie group $\mathbf{G}$, and the set $\{H_L^p\otimes \mathbf{1}_S + \mathbf{1}_L\otimes H_S^p:p=1,\cdots,d_{\mathbf{G}}\}$ constitutes an orthonormal basis of Lie algebra $\mathcal{L}_{\mathbf{G}}$ \cite{hall2013lie}.
Consider a general Lie group $\mathbf{G}$ and denote the Lie algebra by $\mathcal{L}_{\mathbf{G}}$. Here $\{U_L^*(g)\}$ is a unitary representation of $\mathbf{G}$ in the logical space $\mathcal{H}_L$ and we assume that the associated representation of the Lie algebra is $\pi_L$, namely, $\{\pi_L(X):X \in\mathcal{L}_{\mathbf{G}}\}$ is the set of generators for $\{U_L^*(g)\}$. Similarly, for the unitary representation $\{U_S(g)\}$ in the physical system, suppose the associated representation of the Lie algebra is $\pi_S$. Hence, in the Hilbert space $\mathcal{H}_L\otimes\mathcal{H}_S$, $\{\pi_L(X)\otimes \mathbf{1}_S+\mathbf{1}_L\otimes \pi_S(X) :X \in\mathcal{L}_{\mathbf{G}}\}$ gives the representation of the Lie algebra with respect to the unitary representation $\{U_L^*(g)\otimes U_S(g)\}$ \cite{hall2013lie}, and we assume that the set $\{H_L^p\otimes \mathbf{1}_S + \mathbf{1}_L\otimes H_S^p:p=1,\cdots,d_{\mathbf{G}}\}$ constitutes an orthonormal base of the Lie algebra $\mathcal{L}_{\mathbf{G}}$. The sum of skew information
\begin{equation}
    N_{\mathbf{G}}(\rho)=\sum_{p=1}^{d_{\mathbf{G}}} I(\rho,H_L^p\otimes\mathbf{1}_S + \mathbf{1}_L\otimes H_S^p),
\end{equation}
quantifies the asymmetry of state $\rho$ with respect to $\mathbf{G}$ \cite{Li_2020}. We can obtain the noncovariance measure of a channel $\mathcal{E}$ from this asymmetry measure, as defined in Eq.~\eqref{eq:covmea}. Moreover, we find the following relation by combining Lemma \ref{lem} with the expressions of infidelity and noncovariance.
\begin{observation}
For an isometric encoding channel $\mathcal{E}$ and noise channel $\mathcal{N}$, noncovariance with respect to a Lie group $\mathbf{G}$ and infidelity satisfy the trade-off relation
\begin{equation}
   \frac{4 \epsilon^2(\mathcal{N}\circ \mathcal{E})}{n}+ N_{\mathbf{G}}(\mathcal{E})
   \geq \frac{1}{n^2+d_{\mathbf{G}}} I\Big(\ketbra{\Tilde{\psi}},K\Big),
\end{equation}
where $K=\sum_{i,j=1}^{n}\mathbf{1}_L\otimes K_{ij}+\sum_{p=1}^{d_{\mathbf{G}}}(H_L^p\otimes\mathbf{1}_S + \mathbf{1}_L\otimes H_S^p)$.
\end{observation}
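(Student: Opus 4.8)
The plan is to recognize that, once infidelity and noncovariance are both written as sums of generalized skew-information terms evaluated on the \emph{same} state $\ketbra{\Tilde{\psi}}_{LS}$, the claimed inequality is nothing but Lemma~\ref{lem} applied to an appropriate list of $n^2+d_{\mathbf{G}}$ operators whose sum is exactly $K$.

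Concretely, I would proceed as follows. First, invoke the computation already carried out in Sec.~\ref{sec:iso}: for the isometric encoding channel $\mathcal{E}(\rho)=W\rho W^{\dagger}$ one has $\dfrac{4\epsilon^2(\mathcal{N}\circ\mathcal{E})}{n}=\sum_{i,j=1}^{n} I\big(\ketbra{\Tilde{\psi}},\,\mathbf{1}_L\otimes K_{ij}\big)$ with $K_{ij}=PA_i^{\dagger}A_jP$. Second, use that for an isometric encoding the Choi state is the pure state $\Phi_{\mathcal{E}}=\ketbra{\Tilde{\psi}}_{LS}$, so by the definition in Eq.~\eqref{eq:covmea}, $N_{\mathbf{G}}(\mathcal{E})=N_{\mathbf{G}}(\Phi_{\mathcal{E}})=\sum_{p=1}^{d_{\mathbf{G}}} I\big(\ketbra{\Tilde{\psi}},\,H_L^p\otimes\mathbf{1}_S+\mathbf{1}_L\otimes H_S^p\big)$. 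Adding the two expressions shows that the left-hand side $\dfrac{4\epsilon^2}{n}+N_{\mathbf{G}}(\mathcal{E})$ is a sum of precisely $n^2+d_{\mathbf{G}}$ generalized skew-information terms, all with respect to $\ketbra{\Tilde{\psi}}$.

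Third, apply Lemma~\ref{lem} with $N=n^2+d_{\mathbf{G}}$, taking the operator list to be $\{\mathbf{1}_L\otimes K_{ij}\}_{i,j=1}^{n}$ together with $\{H_L^p\otimes\mathbf{1}_S+\mathbf{1}_L\otimes H_S^p\}_{p=1}^{d_{\mathbf{G}}}$. The lemma gives $\sum_{\text{list}} I(\ketbra{\Tilde{\psi}},\,\cdot\,)\ge \frac{1}{n^2+d_{\mathbf{G}}}\,I\big(\ketbra{\Tilde{\psi}},\,\sum_{\text{list}}\cdot\,\big)$, and the sum of the listed operators is by construction $K=\sum_{i,j}\mathbf{1}_L\otimes K_{ij}+\sum_{p}(H_L^p\otimes\mathbf{1}_S+\mathbf{1}_L\otimes H_S^p)$, which is exactly the stated bound. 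It is worth stressing here that Lemma~\ref{lem} was proved for arbitrary (possibly non-Hermitian) operators, which is essential because the operators $\mathbf{1}_L\otimes K_{ij}$ are generally non-Hermitian.

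I do not expect a genuine obstacle: the content is entirely in the two identifications (infidelity-as-skew-information and noncovariance-as-skew-information-of-the-Choi-state), both of which are already established in the excerpt, after which the inequality is an immediate instance of Lemma~\ref{lem}. The only point requiring a little care is the bookkeeping that the total number of terms is $n^2+d_{\mathbf{G}}$ and that their sum matches $K$ term by term; I would write this out explicitly to make the application of the lemma transparent.
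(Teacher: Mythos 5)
Your proposal is correct and matches the paper's own argument: the paper likewise writes $\tfrac{4\epsilon^2}{n}$ as $\sum_{i,j} I(\ketbra{\Tilde{\psi}},\mathbf{1}_L\otimes K_{ij})$, identifies $N_{\mathbf{G}}(\mathcal{E})$ with $\sum_p I(\ketbra{\Tilde{\psi}},H_L^p\otimes\mathbf{1}_S+\mathbf{1}_L\otimes H_S^p)$ via the pure Choi state, and then applies Lemma~\ref{lem} to the same list of $n^2+d_{\mathbf{G}}$ operators summing to $K$. Your added remark that Lemma~\ref{lem} holds for non-Hermitian operators (needed for $\mathbf{1}_L\otimes K_{ij}$) is a correct and worthwhile point of care.
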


Next we consider the special case of the U(1) group. In this case, we assume that $U_L(g)=e^{-i H_L^* g}$ and $U_S(g)=e^{-i H_S g}$, where $H_L^*$ and $H_S$ are Hamiltonians. Then
\begin{equation}
\begin{split}
    U_L^*(g)\otimes U_S(g)&=e^{i H_L g}\otimes e^{-i H_S g}\\
    &=e ^{-i(\mathbf{1}_L \otimes H_S-H_L\otimes\mathbf{1}_S)g}.
\end{split}
\end{equation}
 Thus, the corresponding generated Hamiltonian of $U_L^*(g)\otimes U_S(g)$ is $H =\mathbf{1}_L \otimes H_S-H_L\otimes\mathbf{1}_S$.
%in space $\mathcal{H}_L\otimes \mathcal{H}_S$,Here, we write $H_S$ instead of $H_S\otimes\mathbf{1}_L$ for convenience and similar abbreviation for physical Hamiltonian.
The noncovariance of the isometric encoding channel can be quantified by the skew information
\begin{equation}
   N_{\mathbf{G}}(\mathcal{E})=I\Big(\ketbra{\Tilde{\psi}},H\Big).
\end{equation}
%Here, the Choi state $\Phi_{\mathcal{E}}=\ketbra{\Tilde{\psi}}$ and thus, the noncovariance is
%\begin{equation}
 %   I(\ketbra{\Tilde{\psi}},H)=\bra{\Tilde{\psi}}H^2\ket{\Tilde{\psi}}-\bra{\Tilde{\psi}}H\ket{\Tilde{\psi}}^2.
%\end{equation}

For the U(1) group, the HKS condition is sufficient for the nonexistence of covariant and exact QEC codes \cite{PRXQuantum.2.010343}. Explicitly, if 
\begin{equation}
    H_S\in {\rm span}\{A_i^{\dagger}A_j:i,j=1,\cdots,n\},
\end{equation}
all covariant codes cannot correct errors perfectly. Here we prove again this no-go result. 

Let $H_S=\sum_{i,j=1}^{n}\alpha_{ij}A_i^{\dagger}A_j$, with $\alpha_{ij}\in \mathbb{C}$, and suppose the isometric encoding channel $\mathcal{E}$ is covariant and corrects errors perfectly. Since $N_{\mathbf{G}}(\mathcal{E})=0$ and $H$ is Hermitian, there exists a constant $\lambda$ such that
\begin{equation}
    H(\mathbf{1}_L\otimes W)\ket{\psi}=\lambda (\mathbf{1}_L\otimes W)\ket{\psi}.
\end{equation}
%Since $WW^{\dagger}=P$ and $W^{\dagger}W=\mathbf{1}_L$, there is
This implies that
\begin{equation}
     (\mathbf{1}_L\otimes P)H (\mathbf{1}_L\otimes P)(\mathbf{1}_L\otimes W)\ket{\psi}=\lambda (\mathbf{1}_L\otimes W)\ket{\psi}.
\end{equation}
Consequently, 
\begin{equation}\label{eq:e1}
    \begin{split}
    0&=I\Big(\ketbra{\Tilde{\psi}},(\mathbf{1}_L\otimes P)H(\mathbf{1}_L\otimes P)\Big)\\
    &=I\Big(\ketbra{\Tilde{\psi}},\sum_{i,j}\alpha_{ij}\mathbf{1}_L\otimes K_{ij}-H_L\otimes P\Big).
    \end{split}
\end{equation}
In addition, $\epsilon(\mathcal{N}\circ\mathcal{E})=0$ indicates that 
\begin{equation}\label{eq:e2}
    I(\ketbra{\Tilde{\psi}},\alpha_{ij}\mathbf{1}_L\otimes K_{ij})=0.
\end{equation} 
Combining Eq.~\eqref{eq:e1} with Eq.~\eqref{eq:e2}, we obtain 
\begin{equation}
\begin{split}
    0&\leq (n^2+1) I\Big(\ketbra{\Tilde{\psi}},H_L\otimes P\Big)
   \\ &\leq  I\Big(\ketbra{\Tilde{\psi}},-\sum_{i,j=1}^{n}\alpha_{ij}\mathbf{1}_L\otimes K_{ij}+H_L\otimes P\Big)\\
   &+\sum_{i,j=1}^{n}I\Big(\ketbra{\Tilde{\psi}},\alpha_{ij}\mathbf{1}_L\otimes K_{ij}\Big)=0.
\end{split}
\end{equation}
Therefore,
\begin{equation}\label{eq:a}
   (\mathbf{1}_L\otimes W^{\dagger}) (H_L\otimes P)(\mathbf{1}_L\otimes W)\ket{\psi}=\alpha \ket{\psi}
\end{equation}
holds for some constant $\alpha$. After a direct calculation, we have
\begin{equation}
    \bra{k}H_L\ket{l}=\alpha \delta_{kl},
\end{equation}
or equivalently, $H_L=H_L^*=\alpha \mathbf{1}_L$. This contradicts the nontrivial assumption of a logical Hamiltonian.
%Now the infidelity and the noncovariance are both written in the form of the skew information. In this case, the noncovariance and QEC accuracy have the following trade-off relation.

\subsection{Average infidelity and noncovariance for random codes}
We consider a type of random code in which the encoding isometry has the following expression
\begin{equation}\label{eq:randu}
    W=U_S(\mathbf{1}_L\otimes \ket{0}_A),
\end{equation}
where $A$ is an ancillary system satisfying $\mathcal{H}_S=\mathcal{H}_L\otimes \mathcal{H}_A$ and $U$ is a random unitary under the Haar measure. Equivalently, the projector can be written as
\begin{equation}
    P=U_S(\mathbf{1}_L\otimes \ketbra{0}_A)U_S^{\dagger}.
\end{equation}
For this type of random code, the average infidelity satisfies
\begin{equation}\label{eq:avgi}
\begin{split}
   & \int_{\mathbf{U}(d_S)}\epsilon^2(\mathcal{N}\circ\mathcal{E})d\mu(U)\\
    &=\frac{n (d_L^2-1)}{4d_L(d_S^2-1)} \\
    &\times \sum_{i,j=1}^{n}\Big(\tr(A_j^{\dagger}A_iA_i^{\dagger} A_j)-\frac{1}{d_S}\abs{\tr(A_i^{\dagger} A_j)}^2\Big),
\end{split}
\end{equation}
where $\mathbf{U}(d_S) $ represents the unitary group in system $S$ and $\mu$ is the Haar measure.
When $\mathbf{G}$ is the U(1) group, the average noncovariance is equal to
\begin{equation}\label{eq:avgn}
\begin{split}
  &\int_{\mathbf{U}(d_S)}  N_{\mathbf{G}}(\mathcal{E}) d\mu(U)\\
  &=\frac{d_L\tr (H_L^2)-(\tr H_L)^2}{d_L^2}\\
  &+\frac{(d_Ld_S^2-d_S)\tr(H_S^2)-(d_Ld_S-1)(\tr H_S)^2}{d_Ld_S(d_S^2-1)}.
\end{split}
\end{equation}
We leave the detailed calculation to Appendix B.

From Eqs.~\eqref{eq:avgi} and \eqref{eq:avgn}, we can see that if the dimension of the physical system $d_S$ tends to infinity, the average infidelity tends $0$ while the noncovariance tends to $[d_L\tr (H_L^2)-(\tr H_L)^2]/d_L^2$. 
%\rightarrow \infty
\section{Conclusion and outlook}\label{sec:out}
In this work we defined a quantity termed infidelity to characterize the inaccuracy of an approximate QEC and also to quantify the noncovariance of an encoding channel with respect to a general Lie group. With these two quantities, we derived a trade-off relation between approximate QEC and noncovariance in the special case that the encoding channel is isometric. For a type of random code, we found that when the dimension of the physical system is large enough, the errors can be corrected approximately while noncovariance tends to a constant.

The information scrambling can protect encoding information against errors and hence is closely connected with the capability of error correction \cite{PhysRevLett.125.030505,PhysRevResearch.2.043164,PhysRevA.104.012408}. For future work it would be interesting to explore the QEC ability and the information scrambling quantitatively in systems with particular symmetry via the infidelity we defined, which may help us design explicit covariant and approximate QEC codes from scrambling circuits.
\acknowledgments
This work was supported by the National Natural Science Foundation of China Grant No.~12174216.

\appendix
\onecolumngrid
\section{Proof of Observation \ref{ob:low}}

The Stinespring isometry $V_{L\rightarrow SE}$ of the composite channel 
\begin{equation}
    \mathcal{N}\circ\mathcal{E}(\rho)=\sum_{i,s}A_i E_s\rho E_s^{\dagger}A_i^{\dagger},
\end{equation}
satisfies 
\begin{equation}
    V_{L\rightarrow SE}\ket{\varphi}_{L}=\sum_{i,s} A_i E_s\ket{\varphi}_L\otimes\ket{is}_E.
\end{equation}
Here $\{\ket{is}_E\}$ forms an orthonormal basis of the environment system $E$ and the dimension $d_E=mn$, which is equal to the number of the Kraus operators $\{A_i E_s\}$. Note that we omit the upper bound of the index in the summation sign for convenience in this appendix.

The output state is 
\begin{equation}
\begin{split}
    \ket{\Psi}_{RSE}&=(\mathbf{1}_R\otimes V_{L\rightarrow SE})\ket{\psi}_{RL}\\
    &=\frac{1}{\sqrt{d_L}}\sum_{k} V_{L\rightarrow SE}\ket{k}_L \otimes \ket{k}_R\\
    &=\frac{1}{\sqrt{d_L}}\sum_{k,i,s} A_i E_s\ket{k}_L \otimes \ket{is}_E \otimes \ket{k}_R.
\end{split}
\end{equation}

The reduced state in system $RE$ is
\begin{equation}
    \begin{split}
        \rho_{RE}=&\tr_S \ketbra{\Psi}_{RSE} \\
        =&\frac{1}{d_L}\sum_{i,j,k,l,s,t}\tr_{S}[(A_i E_s\ketbra{k}{l}E_t^{\dagger}A_j^{\dagger})_{S}\otimes \ketbra{is}{jt}_E\otimes\ketbra{k}{l}_R]\\
        =&\frac {1}{d_L}\sum_{i,j,k,l,s,t}\bra{l}E_t^{\dagger} A_j^{\dagger}A_i E_s\ket{k} \ketbra{is}{jt}_E\otimes\ketbra{k}{l}_R.
    \end{split}
\end{equation}
Then the reduced state in system $R$ is the maximally mixed state $\mathbf{1}_{R}/ d_L$ and 
\begin{equation}
    \begin{split}
        \rho_E&=\tr_{R}\rho_{RE}\\
        &=\frac{1}{d_L}\sum_{k,i,j,s,t}\bra{k}E_t^{\dagger}A_j^{\dagger}A_i E_s \ket{k}\ketbra{is}{jt}_E\\
        &=\frac{1}{d_L}\sum_{i,j,s,t}\tr(E_t^{\dagger}A_j^{\dagger}A_i E_s)\ketbra{is}{jt}_E.
    \end{split}
\end{equation}

To calculate the $2$-norm in Eq.~\eqref{eq:lem}, we map the states in system $RE$ to states in system $LE$ through the isometric channel
\begin{equation}
\begin{split}
    &\Lambda(\sum_{k,l,i,j,s,t}\alpha_{klijst}\ketbra{is}{jt}_{E} \otimes \ketbra{k}{l}_{R})\\
    &=\sum_{k,l,i,j,s,t}\alpha_{klijst}^{*}\ketbra{is}{jt}_{E} \otimes \ketbra{k}{l}_{L},
\end{split}
\end{equation}
and then
\begin{equation}
    \norm{\Lambda(\rho_{RE})-\Lambda(\rho_R\otimes \rho_E)}_2=\norm{\rho_{RE}-\rho_R\otimes \rho_E}_2.
\end{equation}

If we let 
\begin{equation}
    \begin{split}
        D&=\Lambda(\rho_{RE})-\Lambda(\rho_R\otimes \rho_E)\\
        &=\frac{1}{d_L}\sum_{i,j,s,t}\Big(E_s^{\dagger}A_i^{\dagger} A_j E_t-\tr(E_s^{\dagger}A_i^{\dagger} A_j E_t)\frac{\mathbf{1}_L}{d_L}\Big)\otimes \ketbra{is}{jt}_E,
    \end{split}
\end{equation}
then we have 
\begin{equation}
    \begin{split}
        \norm{\rho_{RE}-\rho_R\otimes \rho_E}_2^2&=\tr DD^{\dagger}\\
        &=\frac{1}{d_L^2}\sum_{i,j,s,t}\Bigg(\tr(E_s^{\dagger}A_i^{\dagger} A_j E_t E_t^{\dagger} A_j^{\dagger} A_i E_s)-\frac{1}{d_L}\tr(E_s^{\dagger}A_i^{\dagger} A_j E_t)\tr(E_t^{\dagger} A_j^{\dagger} A_i E_s)\Bigg)\\
        &=\frac{1}{d_L^2}\sum_{i,j}\tr(A_i^{\dagger} A_j O A_j^{\dagger} A_i O)-\frac{1}{d_L^3}\sum_{i,j,s,t}\abs{\tr(A_i^{\dagger} A_j E_t E_s^{\dagger})}^2,
    \end{split}
\end{equation}
where $O=\sum_{t}E_t E_t^{\dagger}$.

\section{Calculation of average infidelity and noncovariance}
In a Hilbert space $\mathcal{H}$ with dimension $d$, the uniform Haar measure $\mu$ over unitary operator group $\mathbf{U}(d)$ remains invariant under both left and right multiplication of any unitary operator $V\in \mathbf{U}(d)$ \cite{zhang2014matrix,collins2016random,roberts2017chaos}. Mathematically,
\begin{equation}
    \mu(\mathcal{A})=\mu(\mathcal{A}V)=\mu(V\mathcal{A})
\end{equation}
holds for an arbitrary Borel subset $\mathcal{A}$ and arbitrary unitary $V$. Here we recall some integral formulas over unitary groups, referring to Ref. \cite{zhang2014matrix} for detailed proofs.
\begin{lemma}\label{lem:harfor}
For Haar measure $\mu$, it holds that
\begin{enumerate}
\item \begin{equation}
   \int_{\mathbf{U}(d)} U A U^{\dagger} d \mu (U)=\frac{\tr A}{d} \mathbf{1}_d,
 \end{equation} 
\item\begin{equation}
    \int_{\mathbf{U}(d_A)}(U_A\otimes \mathbf{1}_B) X_{AB}(U_A\otimes \mathbf{1}_B )^{\dagger} d\mu(U_A)=\frac{\mathbf{1}_A}{d_A}\otimes \tr_A X_{AB},
\end{equation}
\item\begin{equation}
    \begin{split}
    &\int_{\mathbf{U}(d)}(U\otimes U)A(U\otimes U)^{\dagger}d\mu(U)\\
    &=\Big(\frac{\tr A}{d^2-1}-\frac{\tr(AF)}{d(d^2-1)}\Big)\mathbf{1}_{d^2}-\Big(\frac{\tr A}{d(d^2-1)}-\frac{\tr(AF)}{d^2-1}\Big)F,
    \end{split}
\end{equation}
where $F$ is the swap operator.
\item\begin{equation}
\begin{split}
    &\int_{\mathbf{U}(d)}UAU^{\dagger}XUBU^{\dagger} d\mu(U)\\
    &=\frac{d\tr(AB)-\tr A\tr B}{d(d^2-1)}(\tr X) \mathbf{1}_d+\frac{d\tr A\tr B-\tr(AB)}{d(d^2-1)}X.
\end{split}
\end{equation}
\end{enumerate}
\end{lemma}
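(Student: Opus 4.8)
The plan is to derive all four identities from a single structural fact: the \emph{twirl} maps $A\mapsto\int(U^{\otimes k})A(U^{\otimes k})^{\dagger}\,d\mu(U)$ are projections onto the commutant of the diagonal action $\{U^{\otimes k}\}$, so each averaged operator must lie in that commutant, and its coefficients are then fixed by taking a few traces. The left/right invariance of $\mu$ quoted just before the lemma is exactly what forces each averaged operator to commute with every $U^{\otimes k}$; Schur's lemma and Schur--Weyl duality supply the allowed forms.

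First I would prove formula~1. The map $A\mapsto\int UAU^{\dagger}\,d\mu(U)$ commutes with conjugation by every $V\in\mathbf{U}(d)$ by invariance of $\mu$, so its output commutes with all $V$; since the defining representation of $\mathbf{U}(d)$ on $\mathbb{C}^{d}$ is irreducible, Schur's lemma forces the output to be a scalar multiple of $\mathbf{1}_{d}$. Taking traces and using $\tr(UAU^{\dagger})=\tr A$ fixes the scalar as $\tr A/d$. Formula~2 is then immediate: expand $X_{AB}=\sum_{k,l}X^{(kl)}_{A}\otimes\ketbra{k}{l}_{B}$ in a basis of $B$, observe that $U_{A}\otimes\mathbf{1}_{B}$ acts blockwise, apply formula~1 to each block $X^{(kl)}_{A}$, and resum using $\sum_{k,l}(\tr X^{(kl)}_{A})\ketbra{k}{l}_{B}=\tr_{A}X_{AB}$.

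The crux is formula~3. The averaged operator $\mathcal{T}(A)=\int(U\otimes U)A(U\otimes U)^{\dagger}\,d\mu(U)$ commutes with every $V\otimes V$, so it lies in the commutant of $\{V^{\otimes 2}\}$. By Schur--Weyl duality this commutant is spanned by the image of the symmetric group $S_{2}$, namely $\{\mathbf{1}_{d^{2}},F\}$; hence $\mathcal{T}(A)=\alpha\mathbf{1}_{d^{2}}+\beta F$ for scalars $\alpha,\beta$. To determine them I would contract with $\mathbf{1}_{d^{2}}$ and with $F$: using $\tr\mathcal{T}(A)=\tr A$ and $\tr(\mathcal{T}(A)F)=\tr(AF)$ (the latter because $F$ commutes with $U\otimes U$), together with $\tr\mathbf{1}_{d^{2}}=d^{2}$, $\tr F=d$, and $F^{2}=\mathbf{1}$, one obtains the linear system $\alpha d^{2}+\beta d=\tr A$ and $\alpha d+\beta d^{2}=\tr(AF)$, whose solution reproduces the stated coefficients.

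Finally, formula~4 reduces to formula~3 by a \emph{swap trick}. The identity $MXN=\tr_{2}\big[(M\otimes N)(X\otimes\mathbf{1})F\big]$ lets me write $UAU^{\dagger}XUBU^{\dagger}=\tr_{2}\big[(U\otimes U)(A\otimes B)(U\otimes U)^{\dagger}(X\otimes\mathbf{1})F\big]$; integrating over $U$ and pulling the average inside the partial trace applies formula~3 to $A\otimes B$, using $\tr(A\otimes B)=\tr A\,\tr B$ and $\tr((A\otimes B)F)=\tr(AB)$. Evaluating the two residual partial traces $\tr_{2}[(X\otimes\mathbf{1})F]=X$ and $\tr_{2}[F(X\otimes\mathbf{1})F]=(\tr X)\mathbf{1}_{d}$ then yields the claimed expression. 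I expect the main obstacle to be formula~3, specifically justifying that the commutant of $\{U\otimes U\}$ is exactly $\mathrm{span}\{\mathbf{1}_{d^{2}},F\}$ (Schur--Weyl duality for two copies); everything else is bookkeeping with traces and partial traces. Since the excerpt cites Ref.~\cite{zhang2014matrix} for full proofs, I would either invoke that duality as a standard fact or, to stay self-contained, verify directly by a matrix-element argument that any operator commuting with all $U\otimes U$ is a linear combination of $\mathbf{1}$ and $F$.
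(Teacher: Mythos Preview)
Your proposal is correct, and in fact the paper does not give its own proof of this lemma at all: it merely states the four formulas and refers the reader to Ref.~\cite{zhang2014matrix} for details. The approach you outline---identifying each twirl as the projection onto the commutant of $\{U^{\otimes k}\}$, using Schur's lemma (for $k=1$) and Schur--Weyl duality (for $k=2$) to pin down the commutant as $\mathrm{span}\{\mathbf{1}\}$ and $\mathrm{span}\{\mathbf{1},F\}$ respectively, and then fixing the coefficients by tracing against the spanning operators---is precisely the standard derivation one finds in that reference and in the Weingarten-calculus literature more broadly. Your reduction of formula~4 to formula~3 via the partial-trace identity $MXN=\tr_{2}\!\big[(M\otimes N)(X\otimes\mathbf{1})F\big]$ is also correct and is the cleanest way to obtain it; the only point worth flagging is that the commutant characterization in formula~3 tacitly assumes $d\geq 2$ so that $\mathbf{1}_{d^{2}}$ and $F$ are linearly independent, which is harmless here.
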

We first calculate the average infidelity. According to the lemma \ref{lem:harfor}, we have
\begin{equation}
    \begin{split}
      &\int_{\mathbf{U}(d_S)}\tr (PA_j^{\dagger}A_i P A_i^{\dagger} A_j)d\mu(U)\\
      &=\tr\int_{\mathbf{U}(d_S)}U(\mathbf{1}_L\otimes \ketbra{0})U^{\dagger}A_j^{\dagger}A_i U(\mathbf{1}_L\otimes \ketbra{0})U^{\dagger}A_i^{\dagger} A_j d\mu(U)\\
      &=\frac{d_S d_L-d_L^2}{d_S(d_S^2-1)}\abs{\tr(A_j^{\dagger}A_i)}^2+\frac{d_Sd_L^2-d_L}{d_S(d_S^2-1)}\tr(A_j^{\dagger}A_iA_i^{\dagger} A_j)
    \end{split}
\end{equation}
and
\begin{equation}
    \begin{split}
        &\int_{\mathbf{U}(d_S)}\abs{\tr(PA_i^{\dagger} A_j)}^2d\mu(U)\\
        &=\tr\int_{\mathbf{U}(d_S)}U(\mathbf{1}_L\otimes \ketbra{0})U^{\dagger}A_i^{\dagger} A_j\otimes U(\mathbf{1}_L\otimes \ketbra{0})U^{\dagger}A_j^{\dagger}A_i d\mu(U)\\
        &=\frac{d_S d_L^2-d_L}{d_S(d_S^2-1)}\abs{\tr(A_i^{\dagger} A_j)}^2-\frac{d_L^2-d_S d_L}{d_S(d_S^2-1)}\tr(A_j^{\dagger}A_iA_i^{\dagger} A_j).
    \end{split}
\end{equation}
Thus, we can obtain
\begin{equation}
\begin{split}
   & \int_{\mathbf{U}(d_S)}\frac{4d_L\epsilon^2(\mathcal{N}\circ\mathcal{E})}{n}d\mu(U)\\
    &=\frac{d_L^2-1}{d_S^2-1}\sum_{i,j}\Big(\tr(A_j^{\dagger}A_iA_i^{\dagger} A_j)-\frac{1}{d_S}\abs{\tr(A_i^{\dagger} A_j)}^2\Big).
\end{split}
\end{equation}

Next, we calculate the average of noncovariance
\begin{equation}
N_{\mathbf{G}}(\mathcal{E})=\bra{\Tilde{\psi}}H^2\ket{\Tilde{\psi}}-\bra{\Tilde{\psi}}H\ket{\Tilde{\psi}}^2.
\end{equation}
The first term is equal to
\begin{equation}
    \begin{split}
      &\int_{\mathbf{U}(d_S)} \bra{\Tilde{\psi}}H^2\ket{\Tilde{\psi}} d\mu(U)\\
       &=\int_{\mathbf{U}_{d_S}}\bra{\psi'}(\mathbf{1}_L\otimes U)^{\dagger} H^2(\mathbf{1}_L\otimes U )\ket{\psi'}d\mu(U)\\
       &=\frac{1}{d_L}\tr (H_L^2)+\frac{1}{d_S}\tr (H_S^2)-\frac{2 \tr H_L \tr H_S}{d_L d_S},
    \end{split}
\end{equation}
where $\ket{\psi'}=1/\sqrt{d_L}\sum_{k}\ket{k}_L\ket{k0}_{S}$. 

The second term is equal to 
\begin{equation}
\begin{split}
    &\int_{\mathbf{U}(d_S)} \bra{\Tilde{\psi}}H\ket{\Tilde{\psi}}^2 d\mu(U)\\
    &=\frac{1}{d_L^2}\int_{\mathbf{U}(d_S)}\{-\tr H_L+\tr[U^{\dagger} H_S U(\mathbf{1}_L\otimes \ketbra{0})]\}^2 d\mu(U)\\
    &=\frac{(\tr H_L)^2}{d_L^2} -\frac{2\tr H_L}{d_L^2}\tr\int_{\mathbf{U}(d_S)}(U^{\dagger} H_S U)(\mathbf{1}_L\otimes \ketbra{0})d\mu(U)\\
   &+\frac{1}{d_L^2} \tr\int_{\mathbf{U}(d_S)}(U^{\dagger})^{\otimes 2}H_S^{\otimes 2}U^{\otimes 2}(\mathbf{1}_{L}\otimes \ketbra{0})^{\otimes 2}d\mu(U)\\
   &=\frac{(\tr H_L)^2}{d_L^2}-\frac{2\tr H_L\tr H_S}{d_L d_S}+\frac{ d_Ld_S-1}{d_Ld_S(d_S^2-1)}(\tr H_S)^2 +\frac{d_S-d_L}{d_Ld_S(d_S^2-1)}\tr(H_S^2)
\end{split}
\end{equation}
Thus, the average noncovariance can be expressed as
\begin{equation}
\begin{split}
  &\int_{\mathbf{U}(d_S)}  N_{\mathbf{G}}(\mathcal{E}) d\mu(U)\\
  &=\frac{d_L\tr (H_L^2)-(\tr H_L)^2}{d_L^2}+\frac{(d_Ld_S^2-d_S)\tr(H_S^2)-(d_Ld_S-1)(\tr H_S)^2}{d_Ld_S(d_S^2-1)}.
\end{split}
\end{equation}


\begin{thebibliography}{50}
\expandafter\ifx\csname natexlab\endcsname\relax\def\natexlab#1{#1}\fi
\expandafter\ifx\csname bibnamefont\endcsname\relax
  \def\bibnamefont#1{#1}\fi
\expandafter\ifx\csname bibfnamefont\endcsname\relax
  \def\bibfnamefont#1{#1}\fi
\expandafter\ifx\csname citenamefont\endcsname\relax
  \def\citenamefont#1{#1}\fi
\expandafter\ifx\csname url\endcsname\relax
  \def\url#1{\texttt{#1}}\fi
\expandafter\ifx\csname urlprefix\endcsname\relax\def\urlprefix{URL }\fi
\providecommand{\bibinfo}[2]{#2}
\providecommand{\eprint}[2][]{\url{#2}}

\bibitem[{\citenamefont{Gottesman}(2010)}]{gottesman2010introduction}
\bibinfo{author}{\bibfnamefont{D.}~\bibnamefont{Gottesman}}, in
  \emph{\bibinfo{booktitle}{Quantum information science and its contributions
  to mathematics, Proceedings of Symposia in Applied Mathematics}}
  (\bibinfo{year}{2010}), vol.~\bibinfo{volume}{68}, pp.
  \bibinfo{pages}{13--58}.

\bibitem[{\citenamefont{Campbell et~al.}(2017)\citenamefont{Campbell, Terhal,
  and Vuillot}}]{campbell2017roads}
\bibinfo{author}{\bibfnamefont{E.~T.} \bibnamefont{Campbell}},
  \bibinfo{author}{\bibfnamefont{B.~M.} \bibnamefont{Terhal}},
  \bibnamefont{and} \bibinfo{author}{\bibfnamefont{C.}~\bibnamefont{Vuillot}},
  \bibinfo{journal}{Nature} \textbf{\bibinfo{volume}{549}},
  \bibinfo{pages}{172} (\bibinfo{year}{2017}),
  \urlprefix\url{https://doi.org/10.1038/nature23460}.

\bibitem[{\citenamefont{Terhal}(2015)}]{RevModPhys.87.307}
\bibinfo{author}{\bibfnamefont{B.~M.} \bibnamefont{Terhal}},
  \bibinfo{journal}{Rev. Mod. Phys.} \textbf{\bibinfo{volume}{87}},
  \bibinfo{pages}{307} (\bibinfo{year}{2015}),
  \urlprefix\url{https://link.aps.org/doi/10.1103/RevModPhys.87.307}.

\bibitem[{\citenamefont{Devitt et~al.}(2013)\citenamefont{Devitt, Munro, and
  Nemoto}}]{Devitt2013quantum}
\bibinfo{author}{\bibfnamefont{S.~J.} \bibnamefont{Devitt}},
  \bibinfo{author}{\bibfnamefont{W.~J.} \bibnamefont{Munro}}, \bibnamefont{and}
  \bibinfo{author}{\bibfnamefont{K.}~\bibnamefont{Nemoto}},
  \bibinfo{journal}{Reports on Progress in Physics}
  \textbf{\bibinfo{volume}{76}}, \bibinfo{pages}{076001}
  (\bibinfo{year}{2013}),
  \urlprefix\url{https://dx.doi.org/10.1088/0034-4885/76/7/076001}.

\bibitem[{\citenamefont{D\"ur et~al.}(2014)\citenamefont{D\"ur, Skotiniotis,
  Fr\"owis, and Kraus}}]{PhysRevLett.112.080801}
\bibinfo{author}{\bibfnamefont{W.}~\bibnamefont{D\"ur}},
  \bibinfo{author}{\bibfnamefont{M.}~\bibnamefont{Skotiniotis}},
  \bibinfo{author}{\bibfnamefont{F.}~\bibnamefont{Fr\"owis}}, \bibnamefont{and}
  \bibinfo{author}{\bibfnamefont{B.}~\bibnamefont{Kraus}},
  \bibinfo{journal}{Phys. Rev. Lett.} \textbf{\bibinfo{volume}{112}},
  \bibinfo{pages}{080801} (\bibinfo{year}{2014}),
  \urlprefix\url{https://link.aps.org/doi/10.1103/PhysRevLett.112.080801}.

\bibitem[{\citenamefont{Unden et~al.}(2016)\citenamefont{Unden,
  Balasubramanian, Louzon, Vinkler, Plenio, Markham, Twitchen, Stacey,
  Lovchinsky, Sushkov et~al.}}]{PhysRevLett.116.230502}
\bibinfo{author}{\bibfnamefont{T.}~\bibnamefont{Unden}},
  \bibinfo{author}{\bibfnamefont{P.}~\bibnamefont{Balasubramanian}},
  \bibinfo{author}{\bibfnamefont{D.}~\bibnamefont{Louzon}},
  \bibinfo{author}{\bibfnamefont{Y.}~\bibnamefont{Vinkler}},
  \bibinfo{author}{\bibfnamefont{M.~B.} \bibnamefont{Plenio}},
  \bibinfo{author}{\bibfnamefont{M.}~\bibnamefont{Markham}},
  \bibinfo{author}{\bibfnamefont{D.}~\bibnamefont{Twitchen}},
  \bibinfo{author}{\bibfnamefont{A.}~\bibnamefont{Stacey}},
  \bibinfo{author}{\bibfnamefont{I.}~\bibnamefont{Lovchinsky}},
  \bibinfo{author}{\bibfnamefont{A.~O.} \bibnamefont{Sushkov}},
  \bibnamefont{et~al.}, \bibinfo{journal}{Phys. Rev. Lett.}
  \textbf{\bibinfo{volume}{116}}, \bibinfo{pages}{230502}
  (\bibinfo{year}{2016}),
  \urlprefix\url{https://link.aps.org/doi/10.1103/PhysRevLett.116.230502}.

\bibitem[{\citenamefont{Shettell et~al.}(2021)\citenamefont{Shettell, Munro,
  Markham, and Nemoto}}]{Shettell_2021}
\bibinfo{author}{\bibfnamefont{N.}~\bibnamefont{Shettell}},
  \bibinfo{author}{\bibfnamefont{W.~J.} \bibnamefont{Munro}},
  \bibinfo{author}{\bibfnamefont{D.}~\bibnamefont{Markham}}, \bibnamefont{and}
  \bibinfo{author}{\bibfnamefont{K.}~\bibnamefont{Nemoto}},
  \bibinfo{journal}{New Journal of Physics} \textbf{\bibinfo{volume}{23}},
  \bibinfo{pages}{043038} (\bibinfo{year}{2021}),
  \urlprefix\url{https://dx.doi.org/10.1088/1367-2630/abf533}.

\bibitem[{\citenamefont{Bennett et~al.}(1996)\citenamefont{Bennett, DiVincenzo,
  Smolin, and Wootters}}]{PhysRevA.54.3824}
\bibinfo{author}{\bibfnamefont{C.~H.} \bibnamefont{Bennett}},
  \bibinfo{author}{\bibfnamefont{D.~P.} \bibnamefont{DiVincenzo}},
  \bibinfo{author}{\bibfnamefont{J.~A.} \bibnamefont{Smolin}},
  \bibnamefont{and} \bibinfo{author}{\bibfnamefont{W.~K.}
  \bibnamefont{Wootters}}, \bibinfo{journal}{Phys. Rev. A}
  \textbf{\bibinfo{volume}{54}}, \bibinfo{pages}{3824} (\bibinfo{year}{1996}),
  \urlprefix\url{https://link.aps.org/doi/10.1103/PhysRevA.54.3824}.

\bibitem[{\citenamefont{Brun et~al.}(2006)\citenamefont{Brun, Devetak, and
  Hsieh}}]{brun2006correcting}
\bibinfo{author}{\bibfnamefont{T.}~\bibnamefont{Brun}},
  \bibinfo{author}{\bibfnamefont{I.}~\bibnamefont{Devetak}}, \bibnamefont{and}
  \bibinfo{author}{\bibfnamefont{M.-H.} \bibnamefont{Hsieh}},
  \bibinfo{journal}{Science} \textbf{\bibinfo{volume}{314}},
  \bibinfo{pages}{436} (\bibinfo{year}{2006}),
  \urlprefix\url{https://www.science.org/doi/abs/10.1126/science.1131563}.

\bibitem[{\citenamefont{Verlinde and Verlinde}(2013)}]{verlinde2013black}
\bibinfo{author}{\bibfnamefont{E.}~\bibnamefont{Verlinde}} \bibnamefont{and}
  \bibinfo{author}{\bibfnamefont{H.}~\bibnamefont{Verlinde}},
  \bibinfo{journal}{Journal of High Energy Physics}
  \textbf{\bibinfo{volume}{2013}}, \bibinfo{pages}{1} (\bibinfo{year}{2013}),
  \urlprefix\url{https://doi.org/10.1007/JHEP10(2013)107}.

\bibitem[{\citenamefont{Eastin and Knill}(2009)}]{PhysRevLett.102.110502}
\bibinfo{author}{\bibfnamefont{B.}~\bibnamefont{Eastin}} \bibnamefont{and}
  \bibinfo{author}{\bibfnamefont{E.}~\bibnamefont{Knill}},
  \bibinfo{journal}{Phys. Rev. Lett.} \textbf{\bibinfo{volume}{102}},
  \bibinfo{pages}{110502} (\bibinfo{year}{2009}),
  \urlprefix\url{https://link.aps.org/doi/10.1103/PhysRevLett.102.110502}.

\bibitem[{\citenamefont{Hayden et~al.}(2021)\citenamefont{Hayden, Nezami,
  Popescu, and Salton}}]{PRXQuantum.2.010326}
\bibinfo{author}{\bibfnamefont{P.}~\bibnamefont{Hayden}},
  \bibinfo{author}{\bibfnamefont{S.}~\bibnamefont{Nezami}},
  \bibinfo{author}{\bibfnamefont{S.}~\bibnamefont{Popescu}}, \bibnamefont{and}
  \bibinfo{author}{\bibfnamefont{G.}~\bibnamefont{Salton}},
  \bibinfo{journal}{PRX Quantum} \textbf{\bibinfo{volume}{2}},
  \bibinfo{pages}{010326} (\bibinfo{year}{2021}),
  \urlprefix\url{https://link.aps.org/doi/10.1103/PRXQuantum.2.010326}.

\bibitem[{\citenamefont{Faist et~al.}(2020)\citenamefont{Faist, Nezami, Albert,
  Salton, Pastawski, Hayden, and Preskill}}]{PhysRevX.10.041018}
\bibinfo{author}{\bibfnamefont{P.}~\bibnamefont{Faist}},
  \bibinfo{author}{\bibfnamefont{S.}~\bibnamefont{Nezami}},
  \bibinfo{author}{\bibfnamefont{V.~V.} \bibnamefont{Albert}},
  \bibinfo{author}{\bibfnamefont{G.}~\bibnamefont{Salton}},
  \bibinfo{author}{\bibfnamefont{F.}~\bibnamefont{Pastawski}},
  \bibinfo{author}{\bibfnamefont{P.}~\bibnamefont{Hayden}}, \bibnamefont{and}
  \bibinfo{author}{\bibfnamefont{J.}~\bibnamefont{Preskill}},
  \bibinfo{journal}{Phys. Rev. X} \textbf{\bibinfo{volume}{10}},
  \bibinfo{pages}{041018} (\bibinfo{year}{2020}),
  \urlprefix\url{https://link.aps.org/doi/10.1103/PhysRevX.10.041018}.

\bibitem[{\citenamefont{Kubica and Demkowicz-Dobrza\ifmmode~\acute{n}\else
  \'{n}\fi{}ski}(2021)}]{PhysRevLett.126.150503}
\bibinfo{author}{\bibfnamefont{A.}~\bibnamefont{Kubica}} \bibnamefont{and}
  \bibinfo{author}{\bibfnamefont{R.}~\bibnamefont{Demkowicz-Dobrza\ifmmode~\acute{n}\else
  \'{n}\fi{}ski}}, \bibinfo{journal}{Phys. Rev. Lett.}
  \textbf{\bibinfo{volume}{126}}, \bibinfo{pages}{150503}
  (\bibinfo{year}{2021}),
  \urlprefix\url{https://link.aps.org/doi/10.1103/PhysRevLett.126.150503}.

\bibitem[{\citenamefont{Kong and Liu}(2022)}]{PRXQuantum.3.020314}
\bibinfo{author}{\bibfnamefont{L.}~\bibnamefont{Kong}} \bibnamefont{and}
  \bibinfo{author}{\bibfnamefont{Z.-W.} \bibnamefont{Liu}},
  \bibinfo{journal}{PRX Quantum} \textbf{\bibinfo{volume}{3}},
  \bibinfo{pages}{020314} (\bibinfo{year}{2022}),
  \urlprefix\url{https://link.aps.org/doi/10.1103/PRXQuantum.3.020314}.

\bibitem[{\citenamefont{Yang et~al.}(2022)\citenamefont{Yang, Mo, Renes,
  Chiribella, and Woods}}]{PhysRevResearch.4.023107}
\bibinfo{author}{\bibfnamefont{Y.}~\bibnamefont{Yang}},
  \bibinfo{author}{\bibfnamefont{Y.}~\bibnamefont{Mo}},
  \bibinfo{author}{\bibfnamefont{J.~M.} \bibnamefont{Renes}},
  \bibinfo{author}{\bibfnamefont{G.}~\bibnamefont{Chiribella}},
  \bibnamefont{and} \bibinfo{author}{\bibfnamefont{M.~P.} \bibnamefont{Woods}},
  \bibinfo{journal}{Phys. Rev. Res.} \textbf{\bibinfo{volume}{4}},
  \bibinfo{pages}{023107} (\bibinfo{year}{2022}),
  \urlprefix\url{https://link.aps.org/doi/10.1103/PhysRevResearch.4.023107}.

\bibitem[{\citenamefont{Zhou et~al.}(2021)\citenamefont{Zhou, Liu, and
  Jiang}}]{Zhou2021newperspectives}
\bibinfo{author}{\bibfnamefont{S.}~\bibnamefont{Zhou}},
  \bibinfo{author}{\bibfnamefont{Z.-W.} \bibnamefont{Liu}}, \bibnamefont{and}
  \bibinfo{author}{\bibfnamefont{L.}~\bibnamefont{Jiang}},
  \bibinfo{journal}{{Quantum}} \textbf{\bibinfo{volume}{5}},
  \bibinfo{pages}{521} (\bibinfo{year}{2021}), ISSN \bibinfo{issn}{2521-327X},
  \urlprefix\url{https://doi.org/10.22331/q-2021-08-09-521}.

\bibitem[{\citenamefont{Liu and Zhou}(2021{\natexlab{a}})}]{liu2021approximate}
\bibinfo{author}{\bibfnamefont{Z.-W.} \bibnamefont{Liu}} \bibnamefont{and}
  \bibinfo{author}{\bibfnamefont{S.}~\bibnamefont{Zhou}},
  \bibinfo{journal}{arXiv preprint arXiv:2111.06355}
  (\bibinfo{year}{2021}{\natexlab{a}}),
  \urlprefix\url{https://doi.org/10.48550/arXiv.2111.06355}.

\bibitem[{\citenamefont{Zhou and Jiang}(2021)}]{PRXQuantum.2.010343}
\bibinfo{author}{\bibfnamefont{S.}~\bibnamefont{Zhou}} \bibnamefont{and}
  \bibinfo{author}{\bibfnamefont{L.}~\bibnamefont{Jiang}},
  \bibinfo{journal}{PRX Quantum} \textbf{\bibinfo{volume}{2}},
  \bibinfo{pages}{010343} (\bibinfo{year}{2021}),
  \urlprefix\url{https://link.aps.org/doi/10.1103/PRXQuantum.2.010343}.

\bibitem[{\citenamefont{Tajima and Saito}(2021)}]{tajima2021universal}
\bibinfo{author}{\bibfnamefont{H.}~\bibnamefont{Tajima}} \bibnamefont{and}
  \bibinfo{author}{\bibfnamefont{K.}~\bibnamefont{Saito}},
  \bibinfo{journal}{arXiv preprint arXiv:2103.01876}  (\bibinfo{year}{2021}),
  \urlprefix\url{https://arxiv.org/abs/2103.01876}.

\bibitem[{\citenamefont{Liu and Zhou}(2021{\natexlab{b}})}]{liu2021quantum}
\bibinfo{author}{\bibfnamefont{Z.-W.} \bibnamefont{Liu}} \bibnamefont{and}
  \bibinfo{author}{\bibfnamefont{S.}~\bibnamefont{Zhou}},
  \bibinfo{journal}{arXiv preprint arXiv:2111.06360}
  (\bibinfo{year}{2021}{\natexlab{b}}),
  \urlprefix\url{https://doi.org/10.48550/arXiv.2111.06360}.

\bibitem[{\citenamefont{Tajima et~al.}(2022)\citenamefont{Tajima, Takagi, and
  Kuramochi}}]{tajima2022universal}
\bibinfo{author}{\bibfnamefont{H.}~\bibnamefont{Tajima}},
  \bibinfo{author}{\bibfnamefont{R.}~\bibnamefont{Takagi}}, \bibnamefont{and}
  \bibinfo{author}{\bibfnamefont{Y.}~\bibnamefont{Kuramochi}},
  \bibinfo{journal}{arXiv preprint arXiv:2206.11086}  (\bibinfo{year}{2022}),
  \urlprefix\url{https://arxiv.org/abs/2206.11086}.

\bibitem[{\citenamefont{Knill and Laflamme}(1997)}]{PhysRevA.55.900}
\bibinfo{author}{\bibfnamefont{E.}~\bibnamefont{Knill}} \bibnamefont{and}
  \bibinfo{author}{\bibfnamefont{R.}~\bibnamefont{Laflamme}},
  \bibinfo{journal}{Phys. Rev. A} \textbf{\bibinfo{volume}{55}},
  \bibinfo{pages}{900} (\bibinfo{year}{1997}),
  \urlprefix\url{https://link.aps.org/doi/10.1103/PhysRevA.55.900}.

\bibitem[{\citenamefont{Wigner and Yanase}(1963)}]{wigner1963information}
\bibinfo{author}{\bibfnamefont{E.~P.} \bibnamefont{Wigner}} \bibnamefont{and}
  \bibinfo{author}{\bibfnamefont{M.~M.} \bibnamefont{Yanase}},
  \bibinfo{journal}{Proceedings of the National Academy of Sciences}
  \textbf{\bibinfo{volume}{49}}, \bibinfo{pages}{910} (\bibinfo{year}{1963}),
  \eprint{https://www.pnas.org/doi/pdf/10.1073/pnas.49.6.910},
  \urlprefix\url{https://www.pnas.org/doi/abs/10.1073/pnas.49.6.910}.

\bibitem[{\citenamefont{Lieb}(1973)}]{lieb1973convex}
\bibinfo{author}{\bibfnamefont{E.~H.} \bibnamefont{Lieb}},
  \bibinfo{journal}{Les rencontres physiciens-math{\'e}maticiens de
  Strasbourg-RCP25} \textbf{\bibinfo{volume}{19}}, \bibinfo{pages}{0}
  (\bibinfo{year}{1973}).

\bibitem[{\citenamefont{Luo and Zhang}(2004)}]{luo2004skew}
\bibinfo{author}{\bibfnamefont{S.}~\bibnamefont{Luo}} \bibnamefont{and}
  \bibinfo{author}{\bibfnamefont{Q.}~\bibnamefont{Zhang}},
  \bibinfo{journal}{IEEE Transactions on information theory}
  \textbf{\bibinfo{volume}{50}}, \bibinfo{pages}{1778} (\bibinfo{year}{2004}),
  \urlprefix\url{https://doi.org/10.1109/TIT.2004.831853}.

\bibitem[{\citenamefont{Luo and Sun}(2018)}]{PhysRevA.98.012113}
\bibinfo{author}{\bibfnamefont{S.}~\bibnamefont{Luo}} \bibnamefont{and}
  \bibinfo{author}{\bibfnamefont{Y.}~\bibnamefont{Sun}},
  \bibinfo{journal}{Phys. Rev. A} \textbf{\bibinfo{volume}{98}},
  \bibinfo{pages}{012113} (\bibinfo{year}{2018}),
  \urlprefix\url{https://link.aps.org/doi/10.1103/PhysRevA.98.012113}.

\bibitem[{\citenamefont{Chen et~al.}(2016)\citenamefont{Chen, Fei, and
  Long}}]{chen2016sum}
\bibinfo{author}{\bibfnamefont{B.}~\bibnamefont{Chen}},
  \bibinfo{author}{\bibfnamefont{S.-M.} \bibnamefont{Fei}}, \bibnamefont{and}
  \bibinfo{author}{\bibfnamefont{G.-L.} \bibnamefont{Long}},
  \bibinfo{journal}{Quantum Information Processing}
  \textbf{\bibinfo{volume}{15}}, \bibinfo{pages}{2639} (\bibinfo{year}{2016}),
  \urlprefix\url{https://doi.org/10.1007/s11128-016-1274-3}.

\bibitem[{\citenamefont{Zhang and Fei}(2021)}]{zhang2021tighter}
\bibinfo{author}{\bibfnamefont{Q.-H.} \bibnamefont{Zhang}} \bibnamefont{and}
  \bibinfo{author}{\bibfnamefont{S.-M.} \bibnamefont{Fei}},
  \bibinfo{journal}{Quantum Information Processing}
  \textbf{\bibinfo{volume}{20}}, \bibinfo{pages}{384} (\bibinfo{year}{2021}),
  \urlprefix\url{https://doi.org/10.1007/s11128-021-03332-5}.

\bibitem[{\citenamefont{Cai}(2021)}]{cai2021sum}
\bibinfo{author}{\bibfnamefont{L.}~\bibnamefont{Cai}},
  \bibinfo{journal}{Quantum Information Processing}
  \textbf{\bibinfo{volume}{20}}, \bibinfo{pages}{72} (\bibinfo{year}{2021}),
  \urlprefix\url{https://doi.org/10.1007/s11128-021-03008-0}.

\bibitem[{\citenamefont{Marvian and
  Spekkens}(2014{\natexlab{a}})}]{marvian2014extending}
\bibinfo{author}{\bibfnamefont{I.}~\bibnamefont{Marvian}} \bibnamefont{and}
  \bibinfo{author}{\bibfnamefont{R.~W.} \bibnamefont{Spekkens}},
  \bibinfo{journal}{Nature communications} \textbf{\bibinfo{volume}{5}},
  \bibinfo{pages}{3821} (\bibinfo{year}{2014}{\natexlab{a}}),
  \urlprefix\url{https://doi.org/10.1038/ncomms4821}.

\bibitem[{\citenamefont{Marvian and
  Spekkens}(2014{\natexlab{b}})}]{PhysRevA.90.062110}
\bibinfo{author}{\bibfnamefont{I.}~\bibnamefont{Marvian}} \bibnamefont{and}
  \bibinfo{author}{\bibfnamefont{R.~W.} \bibnamefont{Spekkens}},
  \bibinfo{journal}{Phys. Rev. A} \textbf{\bibinfo{volume}{90}},
  \bibinfo{pages}{062110} (\bibinfo{year}{2014}{\natexlab{b}}),
  \urlprefix\url{https://link.aps.org/doi/10.1103/PhysRevA.90.062110}.

\bibitem[{\citenamefont{Marvian and Spekkens}(2016)}]{PhysRevA.94.052324}
\bibinfo{author}{\bibfnamefont{I.}~\bibnamefont{Marvian}} \bibnamefont{and}
  \bibinfo{author}{\bibfnamefont{R.~W.} \bibnamefont{Spekkens}},
  \bibinfo{journal}{Phys. Rev. A} \textbf{\bibinfo{volume}{94}},
  \bibinfo{pages}{052324} (\bibinfo{year}{2016}),
  \urlprefix\url{https://link.aps.org/doi/10.1103/PhysRevA.94.052324}.

\bibitem[{\citenamefont{Li et~al.}(2020)\citenamefont{Li, Luo, and
  Sun}}]{Li_2020}
\bibinfo{author}{\bibfnamefont{N.}~\bibnamefont{Li}},
  \bibinfo{author}{\bibfnamefont{S.}~\bibnamefont{Luo}}, \bibnamefont{and}
  \bibinfo{author}{\bibfnamefont{Y.}~\bibnamefont{Sun}},
  \bibinfo{journal}{Europhysics Letters} \textbf{\bibinfo{volume}{130}},
  \bibinfo{pages}{30004} (\bibinfo{year}{2020}),
  \urlprefix\url{https://dx.doi.org/10.1209/0295-5075/130/30004}.

\bibitem[{\citenamefont{Hall}(2013)}]{hall2013lie}
\bibinfo{author}{\bibfnamefont{B.~C.} \bibnamefont{Hall}},
  \emph{\bibinfo{title}{Lie groups, Lie algebras, and representations}}
  (\bibinfo{publisher}{Springer New York}, \bibinfo{year}{2013}),
  \urlprefix\url{https://doi.org/10.1007/978-1-4614-7116-5_16}.

\bibitem[{\citenamefont{Luo}(2006)}]{PhysRevA.73.022324}
\bibinfo{author}{\bibfnamefont{S.}~\bibnamefont{Luo}}, \bibinfo{journal}{Phys.
  Rev. A} \textbf{\bibinfo{volume}{73}}, \bibinfo{pages}{022324}
  (\bibinfo{year}{2006}),
  \urlprefix\url{https://link.aps.org/doi/10.1103/PhysRevA.73.022324}.

\bibitem[{\citenamefont{Gilchrist et~al.}(2005)\citenamefont{Gilchrist,
  Langford, and Nielsen}}]{PhysRevA.71.062310}
\bibinfo{author}{\bibfnamefont{A.}~\bibnamefont{Gilchrist}},
  \bibinfo{author}{\bibfnamefont{N.~K.} \bibnamefont{Langford}},
  \bibnamefont{and} \bibinfo{author}{\bibfnamefont{M.~A.}
  \bibnamefont{Nielsen}}, \bibinfo{journal}{Phys. Rev. A}
  \textbf{\bibinfo{volume}{71}}, \bibinfo{pages}{062310}
  (\bibinfo{year}{2005}),
  \urlprefix\url{https://link.aps.org/doi/10.1103/PhysRevA.71.062310}.

\bibitem[{\citenamefont{Ng and Mandayam}(2010)}]{PhysRevA.81.062342}
\bibinfo{author}{\bibfnamefont{H.~K.} \bibnamefont{Ng}} \bibnamefont{and}
  \bibinfo{author}{\bibfnamefont{P.}~\bibnamefont{Mandayam}},
  \bibinfo{journal}{Phys. Rev. A} \textbf{\bibinfo{volume}{81}},
  \bibinfo{pages}{062342} (\bibinfo{year}{2010}),
  \urlprefix\url{https://link.aps.org/doi/10.1103/PhysRevA.81.062342}.

\bibitem[{\citenamefont{Huang et~al.}(2019)\citenamefont{Huang, Doherty, and
  Flammia}}]{PhysRevA.99.022313}
\bibinfo{author}{\bibfnamefont{E.}~\bibnamefont{Huang}},
  \bibinfo{author}{\bibfnamefont{A.~C.} \bibnamefont{Doherty}},
  \bibnamefont{and} \bibinfo{author}{\bibfnamefont{S.}~\bibnamefont{Flammia}},
  \bibinfo{journal}{Phys. Rev. A} \textbf{\bibinfo{volume}{99}},
  \bibinfo{pages}{022313} (\bibinfo{year}{2019}),
  \urlprefix\url{https://link.aps.org/doi/10.1103/PhysRevA.99.022313}.

\bibitem[{\citenamefont{Fiusa et~al.}(2023)\citenamefont{Fiusa, Soares-Pinto,
  and Pires}}]{PhysRevA.107.032422}
\bibinfo{author}{\bibfnamefont{G.}~\bibnamefont{Fiusa}},
  \bibinfo{author}{\bibfnamefont{D.~O.} \bibnamefont{Soares-Pinto}},
  \bibnamefont{and} \bibinfo{author}{\bibfnamefont{D.~P.} \bibnamefont{Pires}},
  \bibinfo{journal}{Phys. Rev. A} \textbf{\bibinfo{volume}{107}},
  \bibinfo{pages}{032422} (\bibinfo{year}{2023}),
  \urlprefix\url{https://link.aps.org/doi/10.1103/PhysRevA.107.032422}.

\bibitem[{\citenamefont{Nielsen and Chuang}(2011)}]{nielsen2002quantum}
\bibinfo{author}{\bibfnamefont{M.~A.} \bibnamefont{Nielsen}} \bibnamefont{and}
  \bibinfo{author}{\bibfnamefont{I.~L.} \bibnamefont{Chuang}},
  \emph{\bibinfo{title}{Quantum computation and quantum information}}
  (\bibinfo{publisher}{(Cambridge University Press, Cambridge},
  \bibinfo{year}{2011}).

\bibitem[{\citenamefont{B\'eny and Oreshkov}(2010)}]{PhysRevLett.104.120501}
\bibinfo{author}{\bibfnamefont{C.}~\bibnamefont{B\'eny}} \bibnamefont{and}
  \bibinfo{author}{\bibfnamefont{O.}~\bibnamefont{Oreshkov}},
  \bibinfo{journal}{Phys. Rev. Lett.} \textbf{\bibinfo{volume}{104}},
  \bibinfo{pages}{120501} (\bibinfo{year}{2010}),
  \urlprefix\url{https://link.aps.org/doi/10.1103/PhysRevLett.104.120501}.

\bibitem[{\citenamefont{Klesse}(2007)}]{klesse2007approximate}
\bibinfo{author}{\bibfnamefont{R.}~\bibnamefont{Klesse}},
  \bibinfo{journal}{Phys. Rev. A} \textbf{\bibinfo{volume}{75}},
  \bibinfo{pages}{062315} (\bibinfo{year}{2007}),
  \urlprefix\url{https://link.aps.org/doi/10.1103/PhysRevA.75.062315}.

\bibitem[{\citenamefont{D’Ariano}(2004)}]{d2004extremal}
\bibinfo{author}{\bibfnamefont{G.~M.} \bibnamefont{D’Ariano}},
  \bibinfo{journal}{Journal of Mathematical Physics}
  \textbf{\bibinfo{volume}{45}}, \bibinfo{pages}{3620} (\bibinfo{year}{2004}),
  \urlprefix\url{https://doi.org/10.1063/1.1777813}.

\bibitem[{\citenamefont{Choi et~al.}(2020)\citenamefont{Choi, Bao, Qi, and
  Altman}}]{PhysRevLett.125.030505}
\bibinfo{author}{\bibfnamefont{S.}~\bibnamefont{Choi}},
  \bibinfo{author}{\bibfnamefont{Y.}~\bibnamefont{Bao}},
  \bibinfo{author}{\bibfnamefont{X.-L.} \bibnamefont{Qi}}, \bibnamefont{and}
  \bibinfo{author}{\bibfnamefont{E.}~\bibnamefont{Altman}},
  \bibinfo{journal}{Phys. Rev. Lett.} \textbf{\bibinfo{volume}{125}},
  \bibinfo{pages}{030505} (\bibinfo{year}{2020}),
  \urlprefix\url{https://link.aps.org/doi/10.1103/PhysRevLett.125.030505}.

\bibitem[{\citenamefont{Liu}(2020)}]{PhysRevResearch.2.043164}
\bibinfo{author}{\bibfnamefont{J.}~\bibnamefont{Liu}}, \bibinfo{journal}{Phys.
  Rev. Res.} \textbf{\bibinfo{volume}{2}}, \bibinfo{pages}{043164}
  (\bibinfo{year}{2020}),
  \urlprefix\url{https://link.aps.org/doi/10.1103/PhysRevResearch.2.043164}.

\bibitem[{\citenamefont{Nakata et~al.}(2021)\citenamefont{Nakata, Wakakuwa, and
  Yamasaki}}]{PhysRevA.104.012408}
\bibinfo{author}{\bibfnamefont{Y.}~\bibnamefont{Nakata}},
  \bibinfo{author}{\bibfnamefont{E.}~\bibnamefont{Wakakuwa}}, \bibnamefont{and}
  \bibinfo{author}{\bibfnamefont{H.}~\bibnamefont{Yamasaki}},
  \bibinfo{journal}{Phys. Rev. A} \textbf{\bibinfo{volume}{104}},
  \bibinfo{pages}{012408} (\bibinfo{year}{2021}),
  \urlprefix\url{https://link.aps.org/doi/10.1103/PhysRevA.104.012408}.

\bibitem[{\citenamefont{Zhang}(2014)}]{zhang2014matrix}
\bibinfo{author}{\bibfnamefont{L.}~\bibnamefont{Zhang}},
  \bibinfo{journal}{arXiv preprint arXiv:1408.3782}  (\bibinfo{year}{2014}),
  \urlprefix\url{https://arxiv.org/abs/1408.3782}.

\bibitem[{\citenamefont{Collins and Nechita}(2016)}]{collins2016random}
\bibinfo{author}{\bibfnamefont{B.}~\bibnamefont{Collins}} \bibnamefont{and}
  \bibinfo{author}{\bibfnamefont{I.}~\bibnamefont{Nechita}},
  \bibinfo{journal}{Journal of Mathematical Physics}
  \textbf{\bibinfo{volume}{57}}, \bibinfo{pages}{015215}
  (\bibinfo{year}{2016}), \urlprefix\url{https://doi.org/10.1063/1.4936880}.

\bibitem[{\citenamefont{Roberts and Yoshida}(2017)}]{roberts2017chaos}
\bibinfo{author}{\bibfnamefont{D.~A.} \bibnamefont{Roberts}} \bibnamefont{and}
  \bibinfo{author}{\bibfnamefont{B.}~\bibnamefont{Yoshida}},
  \bibinfo{journal}{Journal of High Energy Physics}
  \textbf{\bibinfo{volume}{2017}}, \bibinfo{pages}{1} (\bibinfo{year}{2017}),
  \urlprefix\url{https://doi.org/10.1007/JHEP04(2017)121}.

\end{thebibliography}
\end{document}